\pgfplotsset{compat=newest}
\newtheorem{theorem}{Theorem}
\newtheorem{lemma}{Lemma}
\newtheorem{corollary}{Corollary}
\newtheorem{definition}{Definition}
\newif\ifnotes
\newcommand{\snote}[1]{\ifnotes{{\sf\color{blue} [Sundar: #1]}}\fi}
\newcommand\numberthis{\addtocounter{equation}{1}\tag{\theequation}}
\newcommand{\priors}{\mathbf p}
\newcommand{\prior}{p}
\newcommand{\testmatrix}{G}
\newcommand{\defectives}{\mathcal D}
\newcommand{\infectionstatuses}{\mathbf U}
\newcommand{\infectionstatus}{U}
\newcommand{\numtests}{T}
\newcommand{\numitems}{N}
\newcommand{\testresults}{\mathbf y}
\newcommand{\testResult}{\mathit{y}}
\newcommand{\perror}{\mathbb P_{err}}
\newcommand{\mapdecoder}{R_{map}}
\newcommand{\init}{\mathrm{init}}
\newcommand{\errorMapEvent}{\mathcal{E}}
\DeclareMathOperator*{\expect}{\mathbb E}
\DeclareMathOperator*{\argmax}{arg\,max}
\DeclareMathOperator*{\mean}{mean}
\newcommand{\binEntropy}{\mathit{h_2}}
\newcommand{\susceptState}{\mathcal{S}}
\newcommand{\infectState}{\mathcal{I}}
\newcommand{\recState}{\mathcal{R}}
\newcommand{\SIRmemberState}{\mathit{X}}
\newcommand{\SIRmemberStates}{\mathbf{X}}
\newcommand{\communitySize}{\mathit{C}}
\newcommand{\infectProb}{\mathit{q}}
\newcommand{\recoverProb}{\mathit{r}}
\renewcommand{\Pr}{\mbox{${\mathbb P}$} }
\newcommand{\graph}{\mathcal{G}}
\newcommand{\nDef}{\mathit{k}}
\newcommand{\memberState}{\mathit{U}}
\newcommand{\nTests}{\mathit{T}}
\newcommand{\transmissibility}{\mathit{\beta}}
\newcommand{\recovery}{\mathit{\gamma}}
\newcommand{\testIndex}{\mathit{\tau}}
\newcommand{\defVariable}{\mathit{U}}
\begin{document}

\title{Dynamic group testing to control and monitor disease progression in a population}

\author{%
\vspace{0.5cm}
  Sundara Rajan Srinivasavaradhan\IEEEauthorrefmark{1},
                    Pavlos Nikolopoulos\IEEEauthorrefmark{2},
                    Christina Fragouli\IEEEauthorrefmark{1},
                    Suhas Diggavi\IEEEauthorrefmark{1}\\
  \IEEEauthorrefmark{1}%
                    University of California, Los Angeles, Electrical and Computer Engineering,\\ email: \{sundar, christina.fragouli, suhasdiggavi\}@ucla.edu \\
  \IEEEauthorrefmark{2}%
                    EPFL, Switzerland, email: pavlos.nikolopoulos@epfl.ch
}

\maketitle

\begin{abstract}
In the context of a pandemic like COVID-19, and until most people are vaccinated,
proactive testing and interventions have been proved to be the only means to contain the disease spread.
Recent academic work has offered significant evidence in this regard, but a critical question is still open: Can we accurately identify all new infections that happen every day, without this being forbiddingly expensive, i.e., using only a fraction of the tests needed to test everyone everyday (complete testing)?

Group testing offers a powerful toolset for minimizing the number of tests, but it does not account for the time dynamics behind the infections. 
Moreover, it typically assumes that people are infected independently, while infections are governed by community spread.
Epidemiology, on the other hand, does explore time dynamics and community correlations through the well-established continuous-time SIR stochastic network model, but the standard model does not incorporate discrete-time testing and interventions. 

In this paper, we introduce a ``discrete-time SIR stochastic block model'' that also allows for group testing and interventions on a daily basis. 
Our model can be regarded as a discrete version of the continuous-time SIR stochastic network model over a specific type of weighted graph that captures the underlying community structure.
We analyze that model w.r.t.\ the minimum number of group tests needed everyday to identify all infections with vanishing error probability. 
We find that one can leverage the knowledge of the community and the model to inform nonadaptive group testing algorithms that are order-optimal,
and therefore achieve the same performance as complete testing using a much smaller number of tests.
\end{abstract}

\begin{IEEEkeywords}
Dynamic group testing, SIR stochastic network model, COVID-19 testing
\end{IEEEkeywords}

\IEEEpeerreviewmaketitle

\section{Introduction}

	COVID-19 has revealed the key role of
	accurate epidemiological models and testing in the fight against pandemics\cite{art1,art2,art4,Cov-GpTest-1,Cov-GpTest-2,kucirka2020-PCR}. 
	For any new disease or variant of the existing ones, we will always need to fast develop strategies that allow efficient testing of populations and empower targeted interventions.
	This poses several daunting challenges: (i) we need to test populations not just once but in a continual manner (on a daily basis), 
	and (ii) we need to estimate the epidemic state of each individual and isolate \textit{only} the infected ones.
	And all these, under the accuracy and cost limitations imposed by the various types of tests.
	
	Recent works have identified the significance of proactive testing and individual-level intervention for the control of the disease spread 
	(e.g. \cite{taipale2021populationscale,taipale2020population,bergstrom2020frequency}), 
	but 
	to the best of our knowledge 
	none of them addresses the challenges above efficiently. 
	Most solutions rely on the idea of "testing everyone individually", which is inefficient for two reasons:
	on one hand, using cheap rapid testing usually results in many people (false positives) ending up in isolation without reason and at non-negligible societal cost;
	on the other hand, using accurate tests like PCR can be forbiddingly expensive. 
	As a result, these works need to either neglect the cost of the former or alleviate the cost of the latter by scheduling tests on a (bi)weekly or monthly basis.
	
	Therefore, a critical question is still open: can we use accurate/expensive tests more efficiently? 
	In other words, can we identify all new infections that happen each day (complete testing performance), using significantly fewer accurate/expensive tests than complete testing? Complete accurate testing  (e.g. PCR) on a daily basis and isolation of infected individuals can significantly reduce the number of infected people, as the example in Fig.~\ref{fig:testing_vs_notesting} illustrates. Note  that even with complete testing  new infections still occur  due to the delay between testing and receiving the test results (Fig.~\ref{fig:testing_vs_notesting} assumes the usual delay of one day).
	Still, this is the  best performance we can hope for, both in terms of containing infection and alleviating the societal impact of "false" quarantines; we thus ask how many tests do we really need to replicate it.

	Traditional group testing strategies offer a powerful toolset for minimizing the number of tests, but they do not account for the time dynamics of a disease spread and do not take into account community structure.  
	When the number of available tests is limited, two strategies are usually applied: 
	sample testing, which tests only a sample of selected individuals, 
	and/or group testing, which pools together diagnostic samples to reduce the number of tests needed to identify infected individuals in a population (e.g., see~\cite{GroupTestingMonograph} and references therein).	Both examine a static scenario: 
	the state of individuals is fixed (infected or not),
	and the goal is to identify all infected ones. 
	
	To the best of our knowledge, our work in \cite{isit-paper} was the first paper that targeted community-aware, group-test design for the dynamic case. 
	In that work we used the well-established continuous-time SIR stochastic network model in \cite{mathOfEpidemicsOnNetworks}, where individuals are regarded as the vertices of a graph $\graph$ and an edge denotes a contact between neighboring vertices, 
	 and explored group testing strategies that were able to track the epidemic state evolution at an individual level, using a small number of tests.
	However, due to the complexity of the continuous time model, we were not able to provide theoretical guarantees for the minimum number of tests, and although we did consider testing delays, we left interventions for future work.  
	
	In this paper, we allow interventions, 
	we use discrete-time SIR models for disease spread and we derive theoretical guarantees.  Discrete time models fit more naturally with testing and intervention (which happen at discrete time-intervals), and are more amenable to analysis
	enabling methods to derive guarantees on the number of tests needed to achieve close-to-complete-testing accuracy. In this paper we use a model called the ``discrete-time SIR stochastic block model,''
		which can be considered as a discrete version of the continuous-time SIR stochastic network model over a specific type of weighted graph.
		The graph used captures knowledge of an underlying community structure, as discussed in Section~\ref{sec:setup}. {In Appendix~\ref{app:discrete_vs_cont}, we compare the continuous-time model from \cite{mathOfEpidemicsOnNetworks} with the discrete-time model introduced in our work and justify the use of our discrete-time model.}
We also note that our results   are applicable to a  larger set of SIR models, as  discussed in Section~\ref{subsec:dynamic_testing_results}.

	Our main conclusion is that we can leverage the knowledge of the community and the dynamic model to inform  group testing algorithms that are order-optimal and use a much smaller number of tests than complete testing to achieve the same performance. We arrive at this conclusion building on the following contributions.
	
	
 We first argue that for  discrete-time SIR models, given test results that identify the infection state the  previous day, the problem of identifying the new infections each day reduces to static-case group testing  with independent (but not identical) priors. 
		So, existing nonadaptive algorithms such as CCA and/or random testing~\cite{prior} can be reused. \Cref{fig:TI_chain} illustrates the sequence of events taking place on each day $t$.

		 We then derive a new lower bound (\Cref{thm:static_lower_bound}) for the number of tests needed in the case of independent (but not identical) priors. The main benefit of the new bound is not on "improving" upon the well-known entropy lower bound (stated as \Cref{lemma:entropy_LB}), but having a form that allows to prove order-optimality of group testing algorithms.
		 In particular, we can prove that under mild assumptions existing nonadaptive algorithms are order-optimal in the static case (\Cref{cor:order_bounded_priors}).
		This, in our opinion, is an interesting result on its own, since non-identical priors static group testing remains a relatively unexplored field compared to i.i.d. probabilistic group testing. 

Finally, we derive conditions on the 
discrete-time SIR stochastic block model parameters under which order-optimal group test designs for the static case are also optimal for the dynamic case (\Cref{thm:bounded_proof}).  Simulation results show that indeed under these conditions we can achieve the performance of complete individual testing using a much smaller (close to the entropy lower bound) number of tests; for example, over a period of $50$ days, group testing needs an average of around $100$ tests per day for a population of $1000$ individuals.
Our simulations use existing non-adaptive test designs - we do not derive new code designs as the existing ones are sufficient. However, we do use marginal probabilities derived daily from the SIR model to inform the group design: that is, the group tests we use vary from day to day, and their design leverages the knowledge of the underlying system dynamics that depend on the community structure,  as well as the  previous day test results. 

The rest of the paper is organized as follows: Section~\ref{sec:related} discusses related work; Section~\ref{sec:setup} provides our setup and background; Sec~\ref{sec:results} contains our main results; Section~\ref{sec:numerical} provides numerical evaluation and concludes the paper.

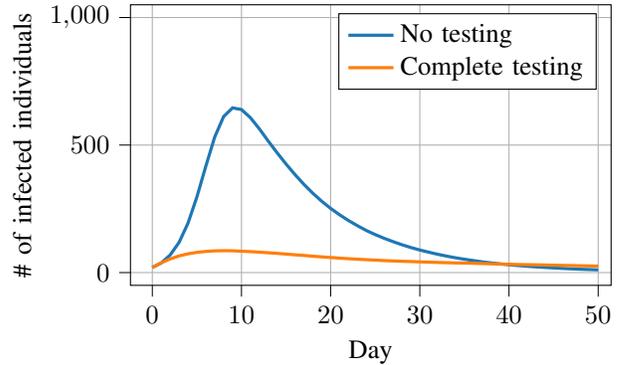
\begin{figure}
    \centering
\begin{tikzpicture}

\definecolor{color0}{rgb}{0.12156862745098,0.466666666666667,0.705882352941177}
\definecolor{color1}{rgb}{1,0.498039215686275,0.0549019607843137}
\definecolor{color2}{rgb}{0.172549019607843,0.627450980392157,0.172549019607843}
\definecolor{color3}{rgb}{0.83921568627451,0.152941176470588,0.156862745098039}
\definecolor{color4}{rgb}{0.580392156862745,0.403921568627451,0.741176470588235}
\definecolor{color5}{rgb}{0.549019607843137,0.337254901960784,0.294117647058824}

\begin{axis}[
width = 0.9\columnwidth,
height = 0.6\columnwidth,
legend cell align={left},
legend pos=north east,
tick align=outside,
tick pos=left,
x grid style={white!69.0196078431373!black},
xmajorgrids,
xmin=-2.45, xmax=51.45,
xminorgrids,
xlabel = Day,
ylabel = \# of infected individuals,
xtick style={color=black},
y grid style={white!69.0196078431373!black},
ymajorgrids,
ymin=-50, ymax=1050,
yminorgrids,
ytick style={color=black}
]
\addplot [very thick, color0]
table {%
0 19.805
1 37.77
2 68.92
3 118.425
4 193.635
5 296.72
6 416.98
7 531.775
8 611.995
9 645.905
10 639.21
11 607.295
12 563.775
13 515.625
14 469.57
15 425.42
16 384.31
17 345.925
18 310.98
19 279.67
20 251.585
21 226.67
22 204.18
23 183.875
24 165.025
25 148.895
26 134.365
27 121.295
28 109.11
29 98.17
30 88.48
31 79.915
32 71.84
33 64.545
34 58.015
35 52.32
36 47.08
37 42.1
38 37.895
39 34.08
40 30.745
41 27.605
42 24.87
43 22.48
44 20.3
45 18.255
46 16.325
47 14.61
48 13.19
49 11.925
50 10.685
};
\addlegendentry{No testing}
\addplot [very thick, color1]
table {%
0 20.08
1 38.435
2 53.38
3 65.31
4 73.525
5 79.275
6 82.955
7 85.19
8 85.84
9 85.47
10 83.825
11 82.21
12 80.025
13 77.725
14 75.375
15 72.555
16 69.77
17 67.16
18 64.19
19 61.59
20 59.13
21 56.655
22 54.44
23 52.395
24 50.595
25 48.86
26 46.995
27 45.645
28 44.295
29 43.305
30 42.08
31 40.86
32 39.74
33 38.775
34 37.825
35 37.34
36 36.545
37 35.405
38 34.36
39 33.555
40 32.505
41 31.705
42 31.14
43 30.615
44 29.88
45 29.035
46 28.17
47 27.53
48 26.845
49 25.995
50 25.495
};
\addlegendentry{Complete testing}
\end{axis}

\end{tikzpicture}
    \caption{\small Discrete-time SIR stochastic block model simulated on a population of 1000 individuals. Notice that without any testing or intervention a large fraction of the population gets infected. With \textit{complete testing} (individually testing everyone everyday) and intervention (isolating individuals who are identified as infected) we can flatten the curve to a large extent. We assume that test results are only available the next day; if the test results were instantaneous we can identify all infections on the first day and isolate them and there would be no subsequent new infections.}
    \label{fig:testing_vs_notesting}
\end{figure}

\begin{figure*}[ht]
	\centering
	\includegraphics[width = 0.8\textwidth]{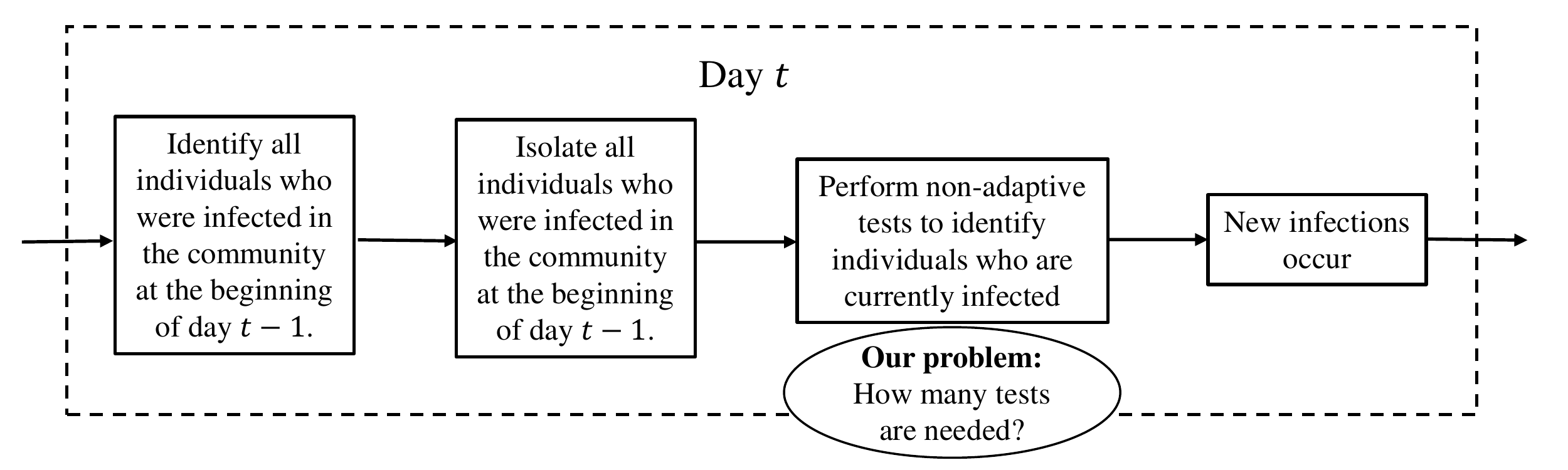}
	\caption{\small The dynamic testing problem with daily interventions. How many tests are needed to achieve complete testing performance everyday, given that test results become available after a day's delay.} 
	\label{fig:TI_chain}
\end{figure*}
\section{Related Work}
\label{sec:related}



As stated in our introduction, our work shares similar goals with our prior work in \cite{isit-paper}, where we considered the well established continuous-time SIR stochastic network model (see \cite{mathOfEpidemicsOnNetworks}) and focused on how many tests to use and whom to test in order to track the infected individuals in the population. 
That work also explored how well one can learn the infected individuals given delayed test results, but gave no theoretical guarantees on the methods and did not consider intervention.
Our discrete-time model for disease spread in this work, however, is more amenable to analysis and illustrates better the usefulness of group testing, being at the same time useful for practical reasons (more about this in \Cref{sec:setup}). 
We further note that our results are applicable to a more general set of SIR models as discussed in \Cref{subsec:dynamic_testing_results}, remark 2.

Our model is closely related to the independent cascade model (see for example \cite{kempe2003maximizing} and references therein), studied in the context of influence maximization in social networks, where we can interpret influence/rumor propagation as infections in our context. A crucial difference of our model from this is that our model allows multiple opportunities of infections over time whereas the independent cascade model only allows one opportunity to "infect". Therefore, as is noted, in our model the infectious individuals remain infectious until recovered or isolated. 



The work in \cite{goenka2020contact} considers a discrete stochastic model for the progression of COVID-19 based on contact networks 
and leverages  the model dynamics to inform a group test decoder; 
however their scope is different, as  they test infrequently and thus infections are highly correlated,  do not consider interventions,
do not look for optimal group test designs, and do not provide theoretical guarantees on the number of tests needed.

Since we use the main principles of the SIR model  our work is closely related to epidemic modeling. 
Works in epidemiology discuss the implications of testing and intervention for COVID-19 employing stochastic network models (see \cite{bergstrom2020frequency,taipale2020population} and references therein) but do not consider test designs that exploit the knowledge of the underlying dynamical system. 
Works in control theory (see \cite{molnar2020safety} and references therein)  consider deterministic SIR compartment models (at the population level) and focus on intervention schemes. Here we are interested in both testing and intervention and use an individual-level SIR model.

Our work can be positioned in the general context of community-aware group testing where infected are not independent, and correlations follow from the community structure. 
Our work in \cite{GroupTesting-community-nonOverlap,GroupTesting-community-overlap} demonstrated that using a known community structure to design group testing strategies and decoding, can significantly extend the advantages of group testing by utilizing these structural dependencies.  Concurrently, the  works in \cite{zhu2020noisy, goenka2020contact} proposed decoding algorithms that take the community structure into account.
Following up on these works in the static case (without temporal dynamics), there have been other recent works with similar goals \cite{ayfer2021adaptive, sennur2021group, bertolotti2020network}. Our work also leverages a known community structure that informs the system dynamics as well as the group test designs. 

Further related to static group testing is the work on graph-constrained group testing (see for example \cite{cheraghchi2012graph}, \cite{karbasi2012sequential}), which solves the problem of how to design group tests when there are constraints on which samples can be pooled together, provided in the form of a graph. In our case, no such constraints exist and individuals can be pooled together into tests freely.

\section{Preliminaries and problem formulation}
\label{sec:setup}
In this section we formalize our setup. 
Since our work for the dynamic case builds upon existing ones from static group testing, we first review some major results in that area that we also reuse in our paper (\Cref{sec:tradGP}).
We then provide our model (\Cref{subsec:dynamic_testing_problem}) and problem formulation (\Cref{subsec:dynamic_testing_problem}).

\subsection{Preliminary: review of results from static group testing}
\label{sec:tradGP}
Traditional group testing typically assumes a population of $\numitems$ individuals out of which some are infected. 
Three infection models are typically considered: (i) in the {\em combinatorial priors model},  a fixed number of infected individuals $\nDef$, are selected uniformly at random among all sets of size $\nDef$; (ii) in \textit{i.i.d probabilistic priors model}, each individual is i.i.d infected with probability $p$;
(iii) in the {\em non-identical probabilistic priors model}, each item $i$ is infected independently of all others with prior probability $\prior_i$, so that the expected number of infected members is $\bar{\nDef} = \sum_{i=1}^{\numitems}\prior_i$~\cite{prior}. In this paper we mostly use results that apply to the last case.

A group test $\testIndex$ takes as input samples from $n_\testIndex$ individuals, pools them together and  outputs a single value:  positive if any one of the samples is infected, and negative if none is infected.  
More precisely, let ${\infectionstatus_i=1}$  when individual $i$ is infected and $0$ otherwise. 
Then the group testing output $\testResult_\testIndex$ takes a binary value calculated as $\testResult_\testIndex= \bigvee_{i\in \defectives_{\testIndex}} \defVariable_i$\footnote{We assume that the tests are noiseless here, for simplicity. The group testing literature also extensively studies the case when the testing output is noisy.}, 
where $\bigvee$ stands for the \texttt{OR} operator (disjunction) and $\defectives_{\testIndex}$ is the group of people participating in the test. 

The usual goal in static group testing is to design a testing algorithm that is able to identify all infection statuses $\infectionstatuses = \left( \infectionstatus_1,\ldots,\infectionstatus_\numitems\right)$. 
These algorithms can be adaptive or non-adaptive. 
Adaptive testing uses the outcome of previous tests to decide what tests to perform next. An example of adaptive testing is {\em binary splitting}, which implements a form of binary search.
Non-adaptive testing constructs, in advance, a \emph{test matrix} $\testmatrix\in \{0,1\}^{\nTests\times \numitems}$  where each row corresponds to one test, each column to one member, and the non-zero elements determine the set $\defectives_{\testIndex}$. 
Although adaptive testing uses less tests than non-adaptive, non-adaptive testing is often more practical as all tests can be executed in parallel.

The main challenge in static group testing is the number of group tests $\nTests=\nTests(\numitems)$ needed to identify the infected members without error or with high probability.  
In the following, we present some well established results that we reuse in our work 
(hereafter we use typical asymptotic notations; i.e., $f(n) = O(g(n)$, $f(n) =\Omega(g(n))$, or $f(n) =\Theta(g(n))$ 
respectively means that a $f(n) \leq c g(n)$, $f(n) \geq c g(n)$, $c_1 g(n) \leq f(n) \leq c_2 g(n)$ asymptotically, where $c$, $c_1$, $c_2$ are universal constants):



$\bullet$
For the probabilistic model (ii),
any non-adaptive algorithm with a success probability bounded away from zero as $\numitems \rightarrow \infty$ must have $\nTests = \Omega \left(\min\{ \bar{\nDef} \log{\numitems},\numitems\}\right)$~\cite[Theorem 1]{bay2020optimal},\cite{coja-oghlan20a}.
This means that either any non-adaptive group testing with a number of tests $O(\bar{\nDef} \log{N})$ is order optimal, or individual testing is order optimal\footnote{The achievability and converse results provided here are usually proved for combinatorial model (i) (a summary can be found in~\cite{price2020fast}), but they are directly applicable to model (ii) by considering $p=\nDef/\numitems$ (see Theorem 1.7 and Theorem 1.8 in \cite{GroupTestingMonograph} or~\cite{bay2020optimal}).}.
In particular, random test designs, such as i.i.d. Bernoulli~\cite{bernoulli_testing1,bernoulli_testing2,PhaseTrans-SODA16} and near-constant tests-per-item~\cite{coja-oghlan19,ncc-Johnson} have been proved to be order-optimal in a sparse regime where $\bar{\nDef} = \Theta(\numitems^\alpha)$ and $\alpha \in (0,1)$. 
In fact, in the same regime,~\cite{coja-oghlan20a} has provided the precise constants for optimal non-adaptive group testing.
Conversely, classic individual testing has been proved to be optimal in the linear  ($\bar{\nDef} = \Theta(\numitems)$)~\cite{individual-optimal} and the mildy sublinear regime ($\bar{\nDef} = \omega(\frac{\numitems}{\log\numitems})$)~\cite{bay2020optimal}.

$\bullet$ 
For the probabilistic model (iii),  
a lower bound for the number of tests needed is given by the entropy, stated below: 
\begin{lemma}[Entropy lower bound]
\label{lemma:entropy_LB}
Consider the non-identical probabilistic priors model of static group testing, where each individual $i\in [\numitems]$ is infected independently with probability $p_i$. The number of tests $T$  needed by a non-adaptive algorithm to identify the infection status of all individuals with a vanishing probability of error satisfies
$$T\geq \sum_{i=1}^{\numitems} \binEntropy\left( \prior_i \right),$$
where $\binEntropy(\cdot)$ is the binary entropy function.
\end{lemma}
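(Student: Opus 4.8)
The plan is to prove this by a standard information-theoretic (Fano-type) converse. First I would let $\infectionstatuses=(\infectionstatus_1,\dots,\infectionstatus_\numitems)$ denote the random infection-status vector, with $\infectionstatus_i\sim\Ber(\prior_i)$ independently, and let $\testresults\in\{0,1\}^{T}$ denote the vector of $T$ test outcomes. Because the individuals are infected independently, the joint entropy decomposes as $\entropy(\infectionstatuses)=\sum_{i=1}^{\numitems}\entropy(\infectionstatus_i)=\sum_{i=1}^{\numitems}\binEntropy(\prior_i)$. Hence it suffices to show $T\ge\entropy(\infectionstatuses)$, i.e.\ that no scheme producing only $T$ bits of output can resolve more than $T$ bits of uncertainty about $\infectionstatuses$.

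Next I would expand $\entropy(\infectionstatuses)=I(\infectionstatuses;\testresults)+\entropy(\infectionstatuses\mid\testresults)$ via the chain rule for mutual information and bound the two terms separately. The first term satisfies $I(\infectionstatuses;\testresults)\le\entropy(\testresults)\le T$, since $\testresults$ is a binary string of length $T$ and therefore carries at most $T$ bits; indeed, for noiseless nonadaptive tests $\testresults$ is a deterministic function of $\infectionstatuses$ given the fixed test matrix $\testmatrix$, so here $I(\infectionstatuses;\testresults)=\entropy(\testresults)$. The second term I would control using Fano's inequality: any decoder outputting an estimate $\hat{\infectionstatuses}(\testresults)$ that fails with probability $\perror$ satisfies $\entropy(\infectionstatuses\mid\testresults)\le\binEntropy(\perror)+\perror\,\numitems$, because $\infectionstatuses$ ranges over an alphabet of size $2^{\numitems}$.

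Combining these estimates gives $\sum_{i=1}^{\numitems}\binEntropy(\prior_i)=\entropy(\infectionstatuses)\le T+\binEntropy(\perror)+\perror\,\numitems$, and rearranging yields $T\ge\sum_{i=1}^{\numitems}\binEntropy(\prior_i)-\binEntropy(\perror)-\perror\,\numitems$. Since the scheme achieves a vanishing probability of error, the correction terms are of lower order and drop out in the limit, producing the claimed bound (for exactly zero error the correction vanishes identically and the inequality is exact). The only \emph{delicate point}, and the place I would spend the most care, is this Fano correction: because $\infectionstatuses$ lives in a space of size $2^{\numitems}$, the crude bound contributes the term $\perror\,\numitems$, so I would need $\perror$ to vanish fast enough relative to $\numitems$ (or to state the result in the asymptotic $(1+o(1))$ sense) for the clean inequality to hold. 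Everything else is a routine entropy chain-rule computation.
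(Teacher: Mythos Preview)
The paper does not give its own proof of this lemma; it simply writes ``See Appendix~A in \cite{prior} for a proof.'' Your Fano-type argument is precisely the standard route to this entropy lower bound and is almost certainly what the cited appendix contains: decompose $\entropy(\infectionstatuses)=I(\infectionstatuses;\testresults)+\entropy(\infectionstatuses\mid\testresults)$, bound the mutual information by $T$ since $\testresults\in\{0,1\}^T$, and bound the conditional entropy via Fano. So in terms of approach there is nothing to compare---you have reproduced the canonical proof.

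You are also right to flag the one genuine subtlety. The Fano correction $\binEntropy(\perror)+\perror\,\numitems$ does not vanish under the bare hypothesis $\perror\to 0$; it vanishes only if $\perror\,\numitems\to 0$, or it becomes negligible relative to the entropy only if $\sum_i\binEntropy(\prior_i)$ grows at least linearly in $\numitems$ (so that one gets $T\ge(1-o(1))\sum_i\binEntropy(\prior_i)$). The lemma as stated in the paper suppresses this $(1+o(1))$ factor, which is standard practice in the group-testing literature but is, strictly speaking, a slight abuse. Your write-up handles this honestly by stating the non-asymptotic inequality $T\ge\sum_i\binEntropy(\prior_i)-\binEntropy(\perror)-\perror\,\numitems$ first and then passing to the limit; that is the correct way to present it.
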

See Appendix A in \cite{prior} for a proof.
On the algorithmic side, two known algorithms are: 
the adaptive laminar algorithms that need at most $2 \sum_{i=1}^{\numitems} \binEntropy\left( \prior_i \right) + 2 \bar{\nDef}$ tests on average,
and the ``Coupon collector'' nonadaptive algorithm (CCA) that needs at most $\nTests \leq 4 e (1+\delta) \bar{\nDef} \ln{\numitems}$ test to achieve an error probability no larger than $2 N^{-\delta}$ whenever $\prior_i \leq \sfrac{1}{2}$ ~\cite{prior,Nonadaptive-1}.\\

\noindent \textbf{Distinction from traditional group testing.} In this paper, we focus on probabilistic infections and non-adaptive test designs, but we differ from traditional group testing in two ways:\\
(a) Traditional  group testing examines a static scenario, where the state of individuals is fixed (infected or not); we are instead interested in a dynamic scenario, where the state of an individual may change, even during the test period. This is particularly true since test results may not be available instantaneously but instead with a delay (e.g. after one day).\\
(b) The infection probabilities $\prior_i$ are not independent; instead, they are correlated where the correlation is induced by the underlying community structure and dynamic infection spread model we consider (for instance, two individuals who live in the same household are more likely to be both infected or not). This implies that $U_i$ and $U_j$ are not independent, as is the assumption in traditional group testing. 

\subsection{Discrete-time SIR stochastic block model}
\label{subsec:model}
We now describe our infection model via the discrete-time SIR stochastic block model with parameters $(\numitems,C,q_1,q_2,p_{\init})$. Consider a population of size $\numitems$ that is partitioned into multiple communities of size $\communitySize$. For simplicity we assume that $\sfrac{\numitems}{\communitySize}$ is an integer.
On any day $t\in \mathbb N$, each individual can be in one of three states: Susceptible ($\susceptState$), Infected ($\infectState$) or Recovered ($\recState$). 
Let $\SIRmemberState_i^{(t)}\in \{\susceptState,\infectState,\recState\}$ denote the state of individual $i$ on day $t$, and define the state of the system as ${\SIRmemberStates}^{(t)} \triangleq (\SIRmemberState_1^{(t)},\SIRmemberState_2^{(t)},...,\SIRmemberState_N^{(t)}).$ A small number of individuals are initially infected, and all new infections occur  during ``transmissible contacts'' between infected and susceptible individuals. Recoveries occur independent of infections.


 More precisely, on day $t=0$, every individual is i.i.d infected with probability $p_{\init}$. The following steps repeat everyday starting at $t=1$:
\begin{itemize}[leftmargin = 3mm]
    \item  An infected individual in some community infects a susceptible one from the same community w.p.~$\infectProb_1$, independently of the other infected individuals of the community.
    \item An infected individual in some community infects a susceptible one from another community w.p.~$\infectProb_2$, independently from all other infected individuals.
    \item An infected individual recovers independently from all other individuals w.p.~$\recoverProb$.
\end{itemize}

The discrete-time SIR stochastic block model can be envisioned as a discrete version of the well-established continuous-time SIR  stochastic network model~\cite{mathOfEpidemicsOnNetworks} on the corresponding weighted graph.
It inherits the main properties from the latter; for example, infections are transmitted only from an infected to a susceptible individual and both infections and recoveries are stochastic. 
The main difference is that in the continuous-time one, the infections and the recoveries happen according to continuous-time Markovian process with transmissibility rate $\transmissibility$ and recovery rate $\recovery$, which means that the time until a new state transition ($\susceptState \rightarrow \infectState$ or $\infectState \rightarrow \recState$) is exponentially distributed (with mean $\transmissibility$ or $\recovery$ respectively). Indeed this makes the event that an individual got infected and subsequently recovered within a single day possible in the continuous-time model, whereas this is impossible in our discrete-time model.

Learning the intra-community and inter-community structure  to model infection transmissions is, we believe,  also practically feasible. Close contact ``community'' data is often readily available; for example students in each classroom in a  school could form a community, and so could workers in the same office space.  We also note that community-level network models  alleviate some of the privacy concerns associated with using  contact tracing data which tracks the exact pairs of individuals who come in contact with each other on a daily basis.

A useful remark about our model is that the state of an individual $\SIRmemberState_i^{(t)} \in \{\susceptState,\infectState,\recState\}$ is different from the infection state $\memberState_i^{(t)} \in \{0,1\}$, where $1$ (resp. $0$) corresponds to the ``infected'' (resp. ``not infected''). Indeed $\infectionstatus_i^{(t)}=1$ iff $\SIRmemberState_i^{(t)}=\mathcal \infectState$. 
This difference is important in our context, because our tests do not distinguish between susceptible and recovered individuals.
In the remainder of the paper, $\SIRmemberState_i^{(t)}$ will be called the ``SIR state'' of individual $i$, 
while $\infectionstatus_i^{(t)}$ will be $i$'s ``infection status.''
As a result, 
whether a individual is infected or not changes with the day, 
and thus we now consider a random variable $\infectionstatus_i^{(t)}$ associated with each individual that describes whether it is infected on day $t$ ($t\in \mathbb \mathbb{N}$). 

\subsection{The dynamic testing problem formulation}
\label{subsec:dynamic_testing_problem}

As can be seen from Fig.~\ref{fig:testing_vs_notesting}, testing everyone, everyday, and isolating infected individuals helps drastically reduce the number of infections that happen on a given day. We assume that the results of a test administered on a particular day are available only the next day (as usually is the case with classic PCR testing for SARS-COV-2). We also isolate only the individuals who test positive, and we do so as soon as the test results are available. Moreover, we bring back the isolated individual into the population only when they are completely recovered. Note that in the SIR model, recovered individuals cannot get infected and play no role in transmitting the infection. Therefore, without loss of generality, we could assume that isolated individuals remain isolated for the rest of the testing period.

Given these assumptions, the question we ask is if complete testing is necessary to identify all new infections everyday, or if we can achieve the same performance as complete testing with significantly fewer number of tests. In particular, how many non-adaptive group tests are necessary and sufficient to identify all new infections (with a vanishing error probability) on each day? Our problem formulation is depicted in Fig.~\ref{fig:TI_chain}.



\noindent To aid a precise mathematical formulation for the problem, we first introduce some notation.

\begin{itemize}[leftmargin=3mm]
	\item $I^{(t)}_j$: number of new infections in community $j$ that occurred on day $t$. Note that in the set-up of Fig.~\ref{fig:TI_chain}, this number is also equal to the number of infected individuals remaining in community $j$ after intervention has been decided for day $t+1$. The new infections which happened on day $t$ will only be identified by the tests administered on day $t+1$, whose results are available only on day $t+2$.
	\item $p_j^{(t)}$: the probability of an individual in community $j$ who was susceptible at the end of day $t-1$ getting infected on day $t$. Note that this is same for every such individual in community $j$, by symmetry of the model. Moreover, we can calculate this probability as
	\begin{align*}
	    p_j^{(t)}&=1-\Pr(\text{individual is not infected on day } t)\\
	    &=1-(1-q_1)^{I_j^{(t-1)}}(1-q_2)^{\sum_{j'\neq j}I^{(t-1)}_{j'}}.
	\end{align*}
\end{itemize}

\begin{figure*}
    \centering
    \includegraphics[width=0.8\textwidth]{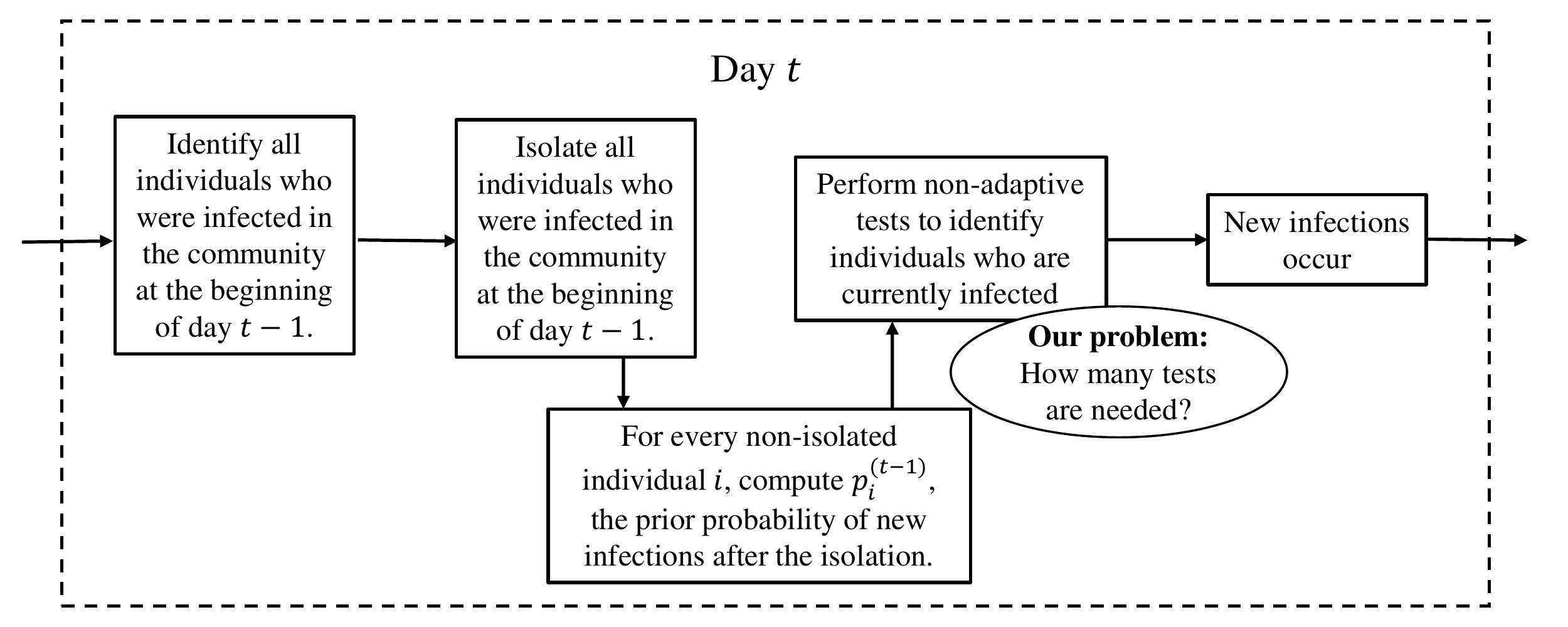}
    \caption{\small From dynamic to static testing: on day $t$ we perfectly learn the states of all non-isolated individuals at the time of testing on the previous day $t-1$. Given this information, we know that each susceptible individual in community $j$ is later infected with probability $p_j^{(t-1)}$ independent of every other individual. How many tests are needed to attain a vanishing probability of error on this non-identical static group testing problem?}
    \label{fig:problem_form}
\end{figure*}

\noindent{\bf Reduction to static group testing with non-identical probabilistic priors.}
Note that given $I_j^{(t-1)}\ \forall j$, an individual belonging to community $j$ is infected \textit{independently} of every other individual with probability $p_j^{(t)}$ on day $t$. Thus, conditioned on the infection status of all individuals on day $t-1$, the infections which happen on day $t$ are independent (but not identically distributed). Now in our dynamic testing problem set-up, on day $t$ we perfectly learn the infection statuses of all non-isolated individuals  at the time of testing on the previous day $t-1$. Given this information, we can exactly calculate the $p_j^{(t-1)}\ \forall j$ (see Fig.~\ref{fig:problem_form}), i.e. the probability that each susceptible individual in community $j$ was later infected because of the non-isolated infected individuals.
 
So, given accurate test results, the dynamic testing problem is transformed daily to the problem of static group testing with non identical probabilistic priors (model (iii) in \Cref{sec:tradGP}).
Therefore, the precise question we are after is the following: 
given that each individual in community $j$ is infected with probability $p_j^{(t)}$ independently of every other individual, how many tests are necessary and sufficient to learn the infection status with a vanishing probability of error? We answer this question in the next section. 


\section{Main results}
\label{sec:results}

In this section, we prove our main theoretical results.
For brevity, we will use the terms ``i.i.d. priors'' and  ``non-identical priors'' to refer to i.i.d probabilistic priors model (ii)
and non identical probabilistic priors model (iii) from \Cref{sec:tradGP}, respectively.
The contents of this section are ordered as follows:
\begin{itemize}[leftmargin = 3mm]
    \item First, we provide a new lower bound on the number of tests required for the problem of static group testing with non i.i.d probabilistic priors (\Cref{thm:static_lower_bound}). 
    To prove \Cref{thm:static_lower_bound}, we use two intermediate results:
    (a) we show that any test design that ``works'' for a given prior probabilities of infection $(p_1,p_2,...,p_\numitems)$ also works for the \textit{reduced} prior probabilities $(p_1',p_2',...,p_N')$ where $p_i'\leq p_i \leq 0.5\ \forall i$. In words, we essentially prove that group testing is easier when the infections are sparser (\Cref{thm:static_reduced});
    and (b) we show the following interesting property of the optimal decoder (\Cref{lem:map_decoder_increasing}) -- if the optimal decoder correctly infers all the infection statuses when a set $\defectives$ is the set of infected individuals, then it will also correctly infer all the infection statuses when $\defectives' \subset \defectives$ is the set of infected individuals.
    
    \item Second, we use simple asymptotic arguments to show that some existing group testing strategies  (such as CCA~\cite{prior} for non-identical priors and random testing for i.i.d priors) are  order-optimal  for non-identical priors (Corollary~\ref{cor:order_bounded_priors}), when $p_{\max}=O(p_{\min})$, where $p_{\max}$ is the maximum entry in $(p_1,p_2,...,p_\numitems)$ and $p_{\min}$ is the minimum entry. The order $O(\cdot)$ is order with respect to the size of the population $\numitems$.
    
    \item Finally, in Theorem~\ref{thm:bounded_proof}, we bridge the gap between our dynamic testing problem formulation and the above static testing problem by showing that if $q_1=O(q_2)$, $p_{\init}\leq 0.5$ and if $q_1\leq \frac{1 - 1/\sqrt{2}}{\communitySize}$ and $q_2\leq \frac{1 - 1/\sqrt{2}}{\numitems}$, then the above two conditions on the prior vector are satisfied everyday in the discrete-time SIR stochastic block model parameterized by $(N,C,p_{\init},q_1,q_2)$. As a result the existing group testing strategies discussed above are order-optimal even for the dynamic testing problem formulation considered, provided that we use a sufficient number of tests each day to identify all new infections for that day. 
\end{itemize}


\subsection{Results on static group testing with non i.i.d priors}
We first consider the problem of static group testing, in which a person is infected independently with a known prior probability $p_i$. Denote by $\priors = (p_1,p_2,...,p_\numitems)$ the prior vector which collects the prior probabilities of infection of all individuals. We first define some  notation specific to this subsection:
\begin{itemize}[leftmargin=3mm]
	\item $\testmatrix$: test matrix
	\item $\defectives$: set of defectives or infections
	\item $\infectionstatuses=(\infectionstatus_1,\infectionstatus_2,...,\infectionstatus_\numitems)$: infection status configuration, i.e., individual $i$ is infected if and only if $\infectionstatus_i=1$.
	\item $\infectionstatuses(\defectives)\triangleq (\infectionstatus_1,...,\infectionstatus_\numitems)$ where $\infectionstatus_i=1$ iff $i\in \defectives$. Basically represents the vector notation for the set of infections given by $\defectives$. Note that there is a one-one correspondence between $\defectives$ and $\infectionstatuses(\defectives)$. We will use these two notations interchangeably based on convenience.
	\item $\testmatrix(\infectionstatuses)$ represents the test results corresponding to the given test design and infection status configuration. 
	\item For a fixed number of tests $\numtests$, define a decoding function $R:\{0,1\}^{[\numtests]}\rightarrow \{0,1\}^{[\numitems]}$ which estimates the infection statuses from the test results.
	\item A defective set $\defectives$ ``explains'' test results $\testresults$ iff $\testmatrix(\infectionstatuses(\defectives))=\testresults$.
	\item $\Pr(\infectionstatuses;\priors)$ denotes the probability of the infection status configuration under priors $\priors$, i.e. 
	$$\Pr(\infectionstatuses;\priors)=\prod_{i=1}^{\numitems} \prior_i^{\infectionstatus_i}(1-\prior_i)^{1-\infectionstatus_i}.$$
	\item Probability of error for a test matrix, decoder pair under given priors \begin{align*}\perror(\testmatrix, R;\priors)
		&\triangleq \expect_{\infectionstatuses\sim \priors} \mathbbm{1}\{R(\testmatrix(\infectionstatuses))\neq \infectionstatuses\} \\
		&= \sum_{\mathbf u \in \{0,1\}^N} \Pr(\infectionstatuses=\mathbf u;\priors) \mathbbm{1}\{R(\testmatrix(\mathbf u))\neq \mathbf u\}.
	\end{align*}
\end{itemize}

\begin{definition}[MAP decoder]
	 For fixed priors $\priors$ and testing matrix $\testmatrix$ with number of tests $\numtests$, we define the corresponding MAP decoder as $\mapdecoder(\ \cdot\ ;\testmatrix,\priors):\{0,1\}^{\numtests}\rightarrow \{0,1\}^{\numitems}$, where
	$$\mapdecoder(\testresults;\testmatrix,\priors)=\argmax\limits_{\infectionstatuses:\testmatrix(\infectionstatuses)=\testresults} \Pr(\infectionstatuses;\priors).$$
	In case of ties, the MAP decoder will select the solution which comes the earliest lexicographically.
\end{definition}

In words, the MAP decoder chooses the most likely configuration which explains the test results. We next show that the MAP decoder is the optimal decoder for a fixed $\testmatrix$ and $\priors$, i.e., the MAP decoder minimizes the probability of error amongst all decoders for any $\testmatrix$, $\priors$. 

 \textbf{Remark.} Though the MAP decoder is optimal, it is unclear if the optimization problem corresponding to the MAP decoder can be solved efficiently. However, many heuristics such as belief propagation (see for example \cite{GroupTesting-community-nonOverlap}) and random sampling methods exist which approximate well the MAP decoder. That said, in this work we use the MAP decoder only as a tool for theoretical analysis of the error probability.

\begin{restatable}[Optimality of MAP decoder]{lemma}{LemmaMAPDecoder}
\label{lem:MAP_decoder_optimality}
	For given test matrix $\testmatrix$ and priors $\priors$, the corresponding MAP decoder minimizes the probability of error for the test matrix under the given priors, i.e.,
	$$\perror(\testmatrix,\mapdecoder(\ \cdot\ ;\testmatrix,\priors);\priors)\leq \perror(\testmatrix,R;\priors)\ \forall R.$$
\end{restatable}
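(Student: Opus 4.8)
The plan is to prove this classical optimality statement by a fiber-counting argument that partitions the configuration space according to the observed test outcome. First I would recast the objective: since
$$\perror(\testmatrix, R; \priors) = 1 - \sum_{\infectionstatuses \in \{0,1\}^{\numitems}} \Pr(\infectionstatuses; \priors)\, \mathbbm{1}\{R(\testmatrix(\infectionstatuses)) = \infectionstatuses\},$$
minimizing the error probability over decoders $R$ is equivalent to maximizing the probability of correct decoding $P_{\mathrm{corr}}(\testmatrix, R; \priors) \triangleq \sum_{\infectionstatuses} \Pr(\infectionstatuses; \priors)\, \mathbbm{1}\{R(\testmatrix(\infectionstatuses)) = \infectionstatuses\}$. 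It therefore suffices to show that $\mapdecoder$ maximizes $P_{\mathrm{corr}}$ among all decoders for the fixed $\testmatrix$ and $\priors$.

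Next I would regroup this sum over configurations according to the test result they induce. For each $\testresults \in \{0,1\}^{\numtests}$, let the \emph{fiber} over $\testresults$ be $\mathcal{F}(\testresults) \triangleq \{\infectionstatuses : \testmatrix(\infectionstatuses) = \testresults\}$; these fibers partition $\{0,1\}^{\numitems}$. Because any decoder $R$ assigns a single output $R(\testresults)$ to the entire fiber $\mathcal{F}(\testresults)$, within that fiber the indicator $\mathbbm{1}\{R(\testmatrix(\infectionstatuses)) = \infectionstatuses\}$ is nonzero for at most one configuration, namely $\infectionstatuses = R(\testresults)$, and it contributes nothing unless $R(\testresults) \in \mathcal{F}(\testresults)$. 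Hence
$$P_{\mathrm{corr}}(\testmatrix, R; \priors) = \sum_{\testresults} \Pr\big(R(\testresults); \priors\big)\, \mathbbm{1}\{R(\testresults) \in \mathcal{F}(\testresults)\} \leq \sum_{\testresults} \max_{\infectionstatuses \in \mathcal{F}(\testresults)} \Pr(\infectionstatuses; \priors),$$
where by convention the maximum over an empty fiber is zero.

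Finally I would observe that this upper bound is attained exactly by the MAP decoder: by definition $\mapdecoder(\testresults; \testmatrix, \priors) = \argmax_{\infectionstatuses \in \mathcal{F}(\testresults)} \Pr(\infectionstatuses; \priors)$, so for every reachable $\testresults$ we have $\mapdecoder(\testresults) \in \mathcal{F}(\testresults)$ and $\Pr(\mapdecoder(\testresults); \priors)$ equals the per-fiber maximum. Substituting back gives $P_{\mathrm{corr}}(\testmatrix, \mapdecoder; \priors) = \sum_{\testresults} \max_{\infectionstatuses \in \mathcal{F}(\testresults)} \Pr(\infectionstatuses; \priors) \geq P_{\mathrm{corr}}(\testmatrix, R; \priors)$ for every $R$, which is equivalent to the claimed error-probability inequality. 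I do not anticipate a genuine obstacle here; the only points requiring care are bookkeeping ones — confirming that fibers over unreachable outcomes (produced by no configuration) contribute zero to both sides, and noting that the lexicographic tie-breaking rule is immaterial, since any maximizer of $\Pr(\cdot; \priors)$ over a fiber yields the same per-fiber value and hence the same $\perror$.
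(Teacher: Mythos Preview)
Your proposal is correct and follows essentially the same approach as the paper: both arguments partition the configuration space by the observed test outcome $\testresults$ and show that, within each fiber $\{\infectionstatuses : \testmatrix(\infectionstatuses)=\testresults\}$, selecting the most probable configuration maximizes the probability of correct decoding. The paper phrases this via conditional expectation $\expect_{\mathbf Y}\expect_{\infectionstatuses|\mathbf Y}$ and works with the posterior $\Pr(\infectionstatuses\mid\mathbf Y)$, whereas you work directly with the joint probabilities and an explicit fiber decomposition, but the two are equivalent and the logical structure is identical.
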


 We give the proof of \Cref{lem:MAP_decoder_optimality} to Appendix~\ref{app:MAP_decoder_opt_proof}.
 
 Given the optimality of the MAP decoder, we will denote by $\perror^*(G,\priors)\triangleq \perror(\testmatrix,\mapdecoder(\ \cdot\ ;\testmatrix,\priors);\priors),$  the optimal probability of error corresponding to a given test design and priors.

We next prove a property of the MAP decoder, and this property will be used in the proof of our main result that follows. The following Lemma  says that it is  easier for the MAP decoder to identify a sparser defective set. 

\begin{restatable}{lemma}{LemmaMAPSparser}
	\label{lem:map_decoder_increasing}
		\label{lemma:MAP_sparser}
		Consider a test matrix $\testmatrix$ and priors $\priors$. Suppose the corresponding MAP decoder is erroneous when identifying the defective set $\defectives$. Then the MAP decoder is also erroneous for the set of defectives is $\defectives\cup \{j\}$ with $p_j\leq 0.5$, i.e.,
		\begin{align*}
			\mathbbm{1}&\left\{\mapdecoder\left(\testmatrix(\infectionstatuses(\defectives\cup \{j\}));\testmatrix,\priors\right) \neq \infectionstatuses(\defectives\cup \{j\})\right\} \\
			& \geq \mathbbm{1}\left\{\mapdecoder(\testmatrix(\infectionstatuses(\defectives));\testmatrix,\priors) \neq \infectionstatuses(\defectives)\right\}.
		\end{align*}
\end{restatable}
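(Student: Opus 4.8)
The plan is to exploit two monotonicity properties — one probabilistic, one combinatorial — and to reduce the claim to exhibiting, for the test results produced by $\defectives\cup\{j\}$, a competing explanation that the MAP decoder weakly prefers over $\infectionstatuses(\defectives\cup\{j\})$. Throughout I assume $j\notin\defectives$, since otherwise $\defectives\cup\{j\}=\defectives$ and both indicators coincide. Write $\testresults\triangleq\testmatrix(\infectionstatuses(\defectives))$ and $\testresults'\triangleq\testmatrix(\infectionstatuses(\defectives\cup\{j\}))$. The two facts I will repeatedly use are: (i) for any $\mathcal A$ with $j\notin\mathcal A$, $\Pr(\infectionstatuses(\mathcal A\cup\{j\});\priors)=\Pr(\infectionstatuses(\mathcal A);\priors)\cdot\frac{p_j}{1-p_j}$, so adding $j$ rescales the posterior weight by the \emph{same} factor $\frac{p_j}{1-p_j}$ regardless of $\mathcal A$, and this factor is $\le 1$ precisely because $p_j\le 0.5$; and (ii) since every test is an \texttt{OR}, adding a defective can only flip outcomes from $0$ to $1$, so $\testresults'\ge\testresults$ componentwise, with equality iff every test containing $j$ already contains a member of $\defectives$ (``$j$ is masked''). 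I would then split on whether $j$ is masked.

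\emph{Masked case} ($\testresults'=\testresults$): here $\defectives$ itself explains $\testresults'$, and by (i) we have $\Pr(\infectionstatuses(\defectives);\priors)\ge\Pr(\infectionstatuses(\defectives\cup\{j\});\priors)$. If the inequality is strict, $\infectionstatuses(\defectives\cup\{j\})$ cannot be the MAP maximizer; if it is an equality (which forces $p_j=0.5$), the two explanations tie, and the lexicographic rule prefers $\infectionstatuses(\defectives)$, since it differs from $\infectionstatuses(\defectives\cup\{j\})$ only in coordinate $j$, where it carries a $0$. Either way the decoder does not output $\infectionstatuses(\defectives\cup\{j\})$, so it errs; note this branch needs no hypothesis, so the implication holds a fortiori.

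\emph{Visible case} ($\testresults'\ne\testresults$): let $\defectives^\star$ be the MAP output on $\testresults$, so by hypothesis $\defectives^\star\ne\defectives$, $\testmatrix(\infectionstatuses(\defectives^\star))=\testresults$, and $\defectives^\star$ is weakly preferred to $\defectives$ (strictly higher probability, or equal probability and lexicographically earlier). First, $j\notin\defectives^\star$: if $j\in\defectives^\star$, every test containing $j$ would be positive under $\testresults$, i.e.\ $j$ would be masked, contradicting $\testresults'\ne\testresults$. Since $\defectives^\star$ and $\defectives$ hit exactly the same tests, $\defectives^\star\cup\{j\}$ and $\defectives\cup\{j\}$ hit the same tests too, giving $\testmatrix(\infectionstatuses(\defectives^\star\cup\{j\}))=\testresults'$; thus $\defectives^\star\cup\{j\}$ is a competing explanation of $\testresults'$, distinct from $\defectives\cup\{j\}$ because $\defectives^\star\ne\defectives$ and neither contains $j$. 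By (i) the common rescaling preserves the probability comparison, and appending the shared coordinate $j$ (a $0\to1$ change identical in both configurations) leaves the first differing coordinate — hence the lexicographic order — untouched. So the decoder still weakly prefers $\defectives^\star\cup\{j\}$ over $\defectives\cup\{j\}$, does not output $\infectionstatuses(\defectives\cup\{j\})$, and errs.

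The main obstacle I anticipate is being scrupulous about the tie-breaking rule: the probability comparison alone leaves the equality cases open, so the argument must verify that appending the single, shared coordinate $j$ either leaves the lexicographic order intact (visible case) or correctly places $\infectionstatuses(\defectives)$ ahead of $\infectionstatuses(\defectives\cup\{j\})$ (masked case). The secondary piece of care is the bookkeeping that $j\notin\defectives^\star$ in the visible case, which is exactly what permits the uniform-rescaling identity (i) to apply to $\defectives$ and $\defectives^\star$ simultaneously.
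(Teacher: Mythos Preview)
Your proof is correct and rests on the same core idea as the paper's: if $\defectives'$ (or in your notation $\defectives^\star$) is a configuration that the MAP decoder weakly prefers to $\defectives$ on the outcomes $\testresults$, then $\defectives'\cup\{j\}$ is weakly preferred to $\defectives\cup\{j\}$ on the outcomes $\testresults'$, because adding $j$ rescales both posterior weights by the common factor $p_j/(1-p_j)$.

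The decomposition, however, differs. The paper splits on whether $j\in\defectives'$: when $j\notin\defectives'$ it uses $\defectives'\cup\{j\}$ as the competitor, and when $j\in\defectives'$ it uses $\defectives'$ itself (here the hypothesis $p_j\le 0.5$ enters to compare $\Pr(\defectives')$ with $\Pr(\defectives\cup\{j\})$). You instead split on whether $j$ is \emph{masked} by $\defectives$. Your masked branch is strictly stronger than the paper's corresponding case: it shows that the decoder errs on $\defectives\cup\{j\}$ unconditionally (using $\defectives$ itself as competitor), without invoking the hypothesis that the decoder erred on $\defectives$. Your visible branch then buys the convenient fact $j\notin\defectives^\star$ for free, so the rescaling identity applies symmetrically. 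You also handle the lexicographic tie-breaking more explicitly than the paper, which only notes that the equal-probability case ``follows identical arguments.'' Both routes are short and valid; yours is marginally cleaner in that the case distinction aligns with where the hypothesis is actually needed.
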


 For the proof of \Cref{lem:map_decoder_increasing}, we refer the reader to Appendix~\ref{app:MAP_decoder_inc_proof}. 

We next prove the main new result for the static case. In words, the following theorem says that the group testing problem is only easier when the infections are sparser. As a result, this allows us to lower/upper bound the group testing problem with non-identical priors by a group testing problem with identical priors.

\begin{theorem}
	\label{thm:static_reduced}
	Consider a testing matrix $\testmatrix$ used with two different sets of priors $\priors$ and $\priors'$. Further let $\prior'_i=\prior_i$ for every $i\in [\numitems], i \neq j$ and $\prior'_{j}\leq \prior_j \leq 0.5$. The two prior vectors are same everywhere except at index $j$ where $\priors'$ is smaller. Then
	$$\perror^*(G,\priors') \leq \perror^*(G,\priors).$$
\end{theorem}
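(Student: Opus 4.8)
The plan is to fix a single decoder and then use a monotone coupling of the two prior distributions, so that the per-realization error events can be compared using \Cref{lem:map_decoder_increasing}. The first step is to reduce the comparison of two \emph{optimal} error probabilities — each attained by a different MAP decoder — to a statement about one fixed decoder. Let $R^\ast \triangleq \mapdecoder(\,\cdot\,;\testmatrix,\priors)$ be the MAP decoder tuned to $\priors$. By the optimality of the MAP decoder under the priors $\priors'$ (\Cref{lem:MAP_decoder_optimality}), we have $\perror^*(\testmatrix,\priors') \leq \perror(\testmatrix, R^\ast; \priors')$, because $R^\ast$ is simply one admissible decoder for the $\priors'$ problem. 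Hence it suffices to prove
$$\perror(\testmatrix, R^\ast; \priors') \leq \perror(\testmatrix, R^\ast; \priors) = \perror^*(\testmatrix,\priors),$$
i.e. that the \emph{same} decoder $R^\ast$ does no worse under the reduced priors $\priors'$. This reduction is the crucial simplification: it removes the dependence of the decoder on the prior and makes the monotonicity lemma, which is stated precisely for $R^\ast$, applicable.

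Next I would set up a monotone coupling of $\infectionstatuses \sim \priors$ and $\infectionstatuses' \sim \priors'$. For each index $i \neq j$ I use common randomness so that $\infectionstatus_i = \infectionstatus'_i$; for index $j$ I draw a single $V \sim \text{Unif}[0,1]$ and set $\infectionstatus_j = \mathbbm{1}\{V \leq \prior_j\}$ and $\infectionstatus'_j = \mathbbm{1}\{V \leq \prior'_j\}$. Because $\prior'_j \leq \prior_j$, the realized defective sets always satisfy $\defectives' \subseteq \defectives$, and in fact either $\defectives' = \defectives$ or $\defectives = \defectives' \cup \{j\}$ with $j \notin \defectives'$. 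The marginals are correct by construction, so both error probabilities can be expressed as expectations over this single coupled experiment.

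Then I would compare the error indicators realization by realization. When $\defectives' = \defectives$ the two indicators coincide. When $\defectives = \defectives' \cup \{j\}$, the hypothesis $\prior_j \leq 0.5$ of the theorem matches exactly the hypothesis of \Cref{lem:map_decoder_increasing}, which gives $\mathbbm{1}\{R^\ast \text{ errs on } \defectives' \cup\{j\}\} \geq \mathbbm{1}\{R^\ast \text{ errs on } \defectives'\}$; that is, the error indicator under $\priors$ pointwise dominates the one under $\priors'$. Taking expectations over the coupling yields $\perror(\testmatrix, R^\ast; \priors') \leq \perror(\testmatrix, R^\ast; \priors)$, and combining this with the reduction step completes the proof.

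The main obstacle is conceptual rather than computational: the two optimal error probabilities are achieved by two \emph{different} decoders (the MAP decoders for $\priors$ and $\priors'$), so \Cref{lem:map_decoder_increasing} cannot be invoked directly against $\perror^*(\testmatrix,\priors')$. The reduction to the fixed decoder $R^\ast$ via MAP-optimality is precisely what resolves this and unlocks the lemma; everything afterward is the routine bookkeeping of the monotone coupling.
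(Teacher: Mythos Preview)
Your proof is correct and follows essentially the same approach as the paper: both fix the decoder to be $\mapdecoder(\,\cdot\,;\testmatrix,\priors)$, invoke \Cref{lem:MAP_decoder_optimality} to reduce to a single-decoder comparison, and then use \Cref{lem:map_decoder_increasing} to compare the two error probabilities. The only cosmetic difference is that the paper writes out the comparison algebraically---splitting the error sum according to whether $j\in\defectives$ to obtain expressions $p_j a+(1-p_j)b$ versus $p'_j a+(1-p'_j)b$ with $a\geq b$---whereas you phrase the identical computation as a monotone coupling; the two are equivalent.
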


\begin{proof}
	We prove this by showing that when the MAP decoder corresponding to $(\testmatrix, \priors)$ pair is used as a decoder with $(\testmatrix, \priors')$, the probability of error is always lower, i.e.,
	\begin{align*}
		\perror(\testmatrix,&\mapdecoder(\ \cdot\ ;\testmatrix,\priors);\priors') \\
		&\leq \perror(\testmatrix,\mapdecoder(\ \cdot\ ;\testmatrix,\priors);\priors) =  \perror^*(G,\priors).
	\end{align*}
	As a result the optimal decoder for $(\testmatrix,\priors')$ pair has a probability of error not exceeding this quantity.

    Now, we can express the probability of error for the MAP decoder of $(\testmatrix,\priors)$ pair. For simplicity of notation in the following derivations, $\errorMapEvent(\defectives) \triangleq \mathbbm 1\left\{\mapdecoder(\testmatrix(\infectionstatuses(\defectives));\testmatrix,\priors) \neq \infectionstatuses(\defectives)\right\}$ denotes the indicator of the event that the MAP decoder is erroneous when the defective set is $\defectives$ (and under further assumptions that the priors are $\priors$ and test matrix is $\testmatrix$).
	\begin{align*}
		&\perror(\testmatrix,\mapdecoder(\ \cdot\ ;\testmatrix,\priors);\priors)\\ &= \sum_{\defectives \in [\numitems]} \Pr(\infectionstatuses(\defectives)) \errorMapEvent(\defectives)\\
		&= \sum_{\defectives \in [\numitems]} \prod_{i\in \defectives} p_i \prod_{l\in [\numitems]\setminus \defectives} (1-p_l) \errorMapEvent(\defectives)\\
		&\overset{(a)}{=} \sum_{\substack{\defectives \in [\numitems]\\|j\in \defectives}} \prod_{i\in \defectives} p_i \prod_{l\in [\numitems]\setminus \defectives} (1-p_l) \errorMapEvent(\defectives)\\
		&\hspace{0.3cm}+ \sum_{\substack{\defectives \in [\numitems]\\|j\notin \defectives}} \prod_{i\in \defectives} p_i \prod_{l\in [\numitems]\setminus \defectives} (1-p_l) \errorMapEvent(\defectives)\\
		&\overset{(b)}{=} p_j \sum_{\substack{\defectives \in [\numitems]\setminus\{j\}}} \prod_{i\in \defectives} p_i \prod_{l\in [\numitems]\setminus \defectives\cup\{j\}} (1-p_l) \errorMapEvent(\defectives\cup \{j\})\\
		&\hspace{0.3cm}+ (1-p_j)\sum_{\substack{\defectives \in [\numitems]\setminus\{j\}}} \prod_{i\in \defectives} p_i \prod_{l\in [\numitems]\setminus \defectives\cup\{j\}} (1-p_l) \errorMapEvent(\defectives) \numberthis,
	\end{align*}
	where in $(a)$ we split the summation into two cases -- one where $j\in \defectives$ and the other where $j\notin \defectives$; in $(b)$ we take $j$ out of the summation.
	
	Similarly, one could express the probability of error for the same decoder with the pair $(\testmatrix,\priors')$ as
\begin{align*}
	&\perror(\testmatrix,\mapdecoder(\ \cdot\ ;\testmatrix,\priors);\priors')\\
	&= p'_j \sum_{\substack{\defectives \in [\numitems]\setminus\{j\}}} \prod_{i\in \defectives} p_i \prod_{l\in [\numitems]\setminus \defectives\cup\{j\}} (1-p_l) \errorMapEvent(\defectives\cup \{j\})\\\
	&\hspace{0.1cm}+ (1-p'_j)\sum_{\substack{\defectives \in [\numitems]\setminus\{j\}}} \prod_{i\in \defectives} p_i \prod_{l\in [\numitems]\setminus \defectives\cup\{j\}} (1-p_l) \errorMapEvent(\defectives) \numberthis.
\end{align*}

	The first error term
	$\perror(\testmatrix,\mapdecoder(\ \cdot\ ;\testmatrix,\priors);\priors)$ is of the form $p_ja+(1-p_j)b$, and the second error term $\perror(\testmatrix,\mapdecoder(\ \cdot\ ;\testmatrix,\priors);\priors')$ is of the form $p'_ja+(1-p'_j)b$. From Lemma~\ref{lemma:MAP_sparser}, we have $\errorMapEvent(\defectives\cup \{j\}) \geq \errorMapEvent(\defectives)$ and hence $a\geq b$. Since $a\geq b$ and $p_j'\leq p_j$, one can verify that
	$p_ja+(1-p_j)b\geq p'_ja+(1-p'_j)b$, and thus $$\perror(\testmatrix,\mapdecoder(\ \cdot\ ;\testmatrix,\priors);\priors) \geq \perror(\testmatrix,\mapdecoder(\ \cdot\ ;\testmatrix,\priors);\priors'),$$ concluding the proof.
\end{proof}

Now one could repeatedly apply Theorem~\ref{thm:static_reduced} on the prior vector $\priors$ to conclude that any test matrix $\testmatrix$ should only do better on the reduced uniform prior vector $\priors_{\min}=(\prior_{\min},\prior_{\min},...,\prior_{\min})$ where $\prior_{\min}\triangleq \min_{i\in [N]}\prior_i$. On the other hand, the test matrix $\testmatrix$ should only do worse on the prior vector $\priors_{\max}=(\prior_{\max},\prior_{\max},...,\prior_{\max})$ where $\prior_{\max}\triangleq \max_{i\in [N]}\prior_i$. This is stated below without a formal proof.
\begin{corollary}
\label{cor:p_error}
	Consider a test matrix $\testmatrix$ and a prior vector $\priors$ such that $\prior_i \leq 0.5$ for all $i\in [\numtests]$. Let $\priors_{\min}=(\prior_{\min},\prior_{\min},...,\prior_{\min})$ where $\prior_{\min}\triangleq \min_{i\in [N]}\prior_i$ and let $\priors_{\max}=(\prior_{\max},\prior_{\max},...,\prior_{\max})$ where $\prior_{\max}\triangleq \max_{i\in [N]}\prior_i$. Then
	$$\perror^*(G,\priors_{\max}) \geq \perror^*(G,\priors)\geq \perror^*(G,\priors_{\min}).$$
\end{corollary}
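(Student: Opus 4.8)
The plan is to derive both inequalities from \Cref{thm:static_reduced} by a telescoping argument, decreasing one coordinate of the prior vector at a time. The enabling observation is that \Cref{thm:static_reduced} compares two prior vectors that differ in a single coordinate, with the larger of the two values at that coordinate bounded by $0.5$. Hence it suffices to interpolate between $\priors$ and each uniform vector through a chain of prior vectors in which consecutive vectors differ at exactly one index, invoke the theorem at each link, and chain the resulting inequalities.

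For the lower inequality $\perror^*(\testmatrix,\priors)\geq \perror^*(\testmatrix,\priors_{\min})$, I would define a sequence $\priors^{(0)},\priors^{(1)},\ldots,\priors^{(\numitems)}$ with $\priors^{(0)}=\priors$, where $\priors^{(k)}$ has its first $k$ coordinates set to $\prior_{\min}$ and its remaining coordinates equal to the original $\prior_i$. Then $\priors^{(\numitems)}=\priors_{\min}$, and $\priors^{(k)}$ differs from $\priors^{(k-1)}$ only at index $k$, where the value is lowered from $\prior_k$ to $\prior_{\min}$ with $\prior_{\min}\leq \prior_k\leq 0.5$. Applying \Cref{thm:static_reduced} with $j=k$ gives $\perror^*(\testmatrix,\priors^{(k)})\leq \perror^*(\testmatrix,\priors^{(k-1)})$ for each $k$, and chaining these $\numitems$ inequalities yields the claim.

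For the upper inequality $\perror^*(\testmatrix,\priors_{\max})\geq \perror^*(\testmatrix,\priors)$, I would run the symmetric argument starting from $\priors^{(0)}=\priors_{\max}$, where now $\priors^{(k)}$ restores the first $k$ coordinates to their original values $\prior_i$ and leaves the rest at $\prior_{\max}$, so that $\priors^{(\numitems)}=\priors$. Each step lowers a single coordinate from $\prior_{\max}$ to $\prior_k$ with $\prior_k\leq \prior_{\max}\leq 0.5$, so \Cref{thm:static_reduced} again applies and telescoping gives $\perror^*(\testmatrix,\priors)\leq \perror^*(\testmatrix,\priors_{\max})$.

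There is no genuine obstacle here; the corollary is essentially a repeated application of the theorem. The only point requiring care is verifying the hypothesis $\prior'_j\leq \prior_j\leq 0.5$ at every link of both chains. This holds immediately because all original entries satisfy $\prior_i\leq 0.5$, whence $\prior_{\max}=\max_i \prior_i\leq 0.5$ as well; consequently every coordinate value appearing in either interpolating sequence lies in $[0,0.5]$, and each transition weakly decreases exactly one coordinate, so the theorem is applicable throughout.
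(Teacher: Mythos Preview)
Your proposal is correct and follows exactly the approach the paper indicates: the paper states the corollary ``without a formal proof,'' noting only that one can ``repeatedly apply Theorem~\ref{thm:static_reduced}'' to pass from $\priors$ down to $\priors_{\min}$ and up to $\priors_{\max}$. Your telescoping construction with the interpolating sequences $\priors^{(0)},\ldots,\priors^{(\numitems)}$ is precisely the formalization of that remark, and your verification that each step satisfies the $\prior'_j\le\prior_j\le 0.5$ hypothesis is the only detail that needed checking.
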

As a consequence of the above corollary, the number of tests required to attain a fixed (small) probability of error $\epsilon$ with prior vector $\priors_{\min}$ is not more than the number of tests required to attain probability of error $\epsilon$ with prior vector $\priors$. This observation allows us to use the lower bound on the number of tests when the priors are identical. This is made precise in the following theorem.

\begin{theorem}
\label{thm:static_lower_bound}
    Consider the non-adaptive group testing problem with $\numitems$ items where  the probability of item $i$ being infected is $\prior_i \leq 0.5$. Let $\prior_{\min}\triangleq \min\limits_{i\in [\numitems]} \prior_i$. In order to achieve a probability of error  $\rightarrow 0$  as $N\rightarrow \infty$, the number of tests must be $$T(\priors) = \Omega(\min\{\numitems,\numitems \prior_{\min}\log \numitems\}).$$
\end{theorem}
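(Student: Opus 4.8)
The plan is to reduce the non-identical prior problem to the i.i.d.\ probabilistic model (model (ii)) and then invoke the known converse for that model. The key observation is that \Cref{cor:p_error} already furnishes the needed comparison: for any test matrix $\testmatrix$ and any prior vector $\priors$ whose entries are all at most $\sfrac{1}{2}$, we have $\perror^*(\testmatrix,\priors) \ge \perror^*(\testmatrix,\priors_{\min})$, where $\priors_{\min}=(\prior_{\min},\dots,\prior_{\min})$ is the uniform vector obtained by lowering every coordinate to the smallest prior. This monotonicity is exactly what turns a converse for identical priors into a converse for the harder, non-identical problem.

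First I would fix an arbitrary non-adaptive design $\testmatrix$ that achieves $\perror^*(\testmatrix,\priors)\to 0$ as $\numitems\to\infty$. Because lowering the priors can only help the optimal decoder, \Cref{cor:p_error} gives $\perror^*(\testmatrix,\priors_{\min})\le \perror^*(\testmatrix,\priors)\to 0$. Hence the very same matrix $\testmatrix$ also drives the error to zero on the i.i.d.\ model in which every item is infected independently with the common probability $\prior_{\min}$; in particular its success probability on that model is bounded away from zero.

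Second, I would apply the converse for the i.i.d.\ model recalled in \Cref{sec:tradGP} (from \cite{bay2020optimal,coja-oghlan20a}): any non-adaptive scheme whose success probability stays bounded away from zero on the i.i.d.\ model with parameter $\prior_{\min}$ must use $\numtests = \Omega(\min\{\bar{\nDef}\log\numitems,\numitems\})$ tests, where here the expected number of defectives is $\bar{\nDef}=\numitems\,\prior_{\min}$. Substituting $\bar{\nDef}=\numitems\,\prior_{\min}$ yields exactly $\numtests = \Omega(\min\{\numitems,\numitems\,\prior_{\min}\log\numitems\})$, which is the claim.

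The conceptual difficulty has already been absorbed into \Cref{thm:static_reduced} and \Cref{lem:map_decoder_increasing}, which establish monotonicity of the optimal error in the priors; granting those, the remaining argument is a short sandwich-and-cite rather than a genuine obstacle. The only points requiring care are bookkeeping: checking that the hypothesis $\prior_i\le\sfrac{1}{2}$ needed by \Cref{cor:p_error} is precisely the one assumed in the statement, and that ``vanishing error'' indeed implies the weaker ``success probability bounded away from zero'' hypothesis under which the i.i.d.\ converse is phrased. Both are immediate.
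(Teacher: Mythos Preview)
Your proposal is correct and follows essentially the same route as the paper: invoke \Cref{cor:p_error} to transfer a vanishing-error design for $\priors$ to one for the i.i.d.\ vector $\priors_{\min}$, and then apply the known i.i.d.\ converse from \cite{bay2020optimal,coja-oghlan20a} with $\bar{\nDef}=\numitems\prior_{\min}$. Your write-up is in fact more explicit than the paper's on the bookkeeping (the $\prior_i\le\sfrac{1}{2}$ hypothesis and the ``vanishing error $\Rightarrow$ success probability bounded away from zero'' implication), but the argument is the same.
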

\begin{proof}
    From Corollary \ref{cor:p_error}, suppose a test matrix $\testmatrix$ achieves a probability of error $\epsilon$ on prior vector $\priors$, the same test matrix achieves a probability of error not more than $\epsilon$ on the prior vector $\priors_{\min}$, where $\priors_{\min}=(\prior_{\min},\prior_{\min},...,\prior_{\min})$ and $\prior_{\min}\triangleq \min_{i\in [N]}\prior_i$. Any strategy that achieves a probability of error $\rightarrow 0$ as $N\rightarrow \infty$ with the prior vector $\priors_{\min}$ requires a number of tests equal to $\Omega(\min\{\numitems,\numitems \prior_{\min}\log \numitems\})$. Thus, we need at least as many tests with the prior vector $\priors$.
\end{proof}

 As discussed in Section~\ref{sec:tradGP}, the entropy bound in \Cref{lemma:entropy_LB} is an alternate lower bound on the number of tests needed for this problem. We note that the entropy bound might be greater or smaller than  the term $\numitems \prior_{\min}\log \numitems$  in Theorem~\ref{thm:static_lower_bound}. In particular, if $\prior_i\leq 1/2\ \forall i$ it is easy to see that $\sum_{i=1}^\numitems \binEntropy(p_i)\geq N \binEntropy(p_{\min}) \geq Np_{\min} \log \sfrac{1}{p_{\min}}$.  However the term $\sfrac{1}{p_{\min}}$ may be smaller or larger than $N$; thus our bound, that applies independently of the value of $p_{\min}$ (as long as $p_i\leq 0.5$) cannot be directly derived from the entropy bound, and could be either greater or lesser than the entropy bound.
  Having said that, the main advantage of the lower bound in Theorem~\ref{thm:static_lower_bound} is its particular form, which allows the proof of order-optimality of several static group testing algorithms, as we will see in the next subsection.


Now, if the prior vector $\priors$ is ``bounded'', in the sense that the maximum entry and minimum entry in $\priors$ differ by a constant factor (constant with respect to $\numitems$), then the  lower bound can be re-written in terms of the maximum entry in $\priors$ or the mean of $\priors$. Basically we here just use the fact that constant factors do not affect the order. We next make this corollary precise.

\begin{definition}[Bounded priors]
    Let $\eta \in [1,\infty)$ be a fixed constant (constant with respect to $\numitems$). A prior vector $\priors$ of length $\numitems$ is called $\eta-$bounded if
    $$\frac{\max_{i} p_i}{\min_i p_i} \leq \eta.$$
\end{definition}

\begin{corollary}[Lower bound for bounded priors]
\label{cor:order_bounded_priors}
    Consider the non-adaptive group testing problem with $\numitems$ items where  the probability of item $i$ being infected is $\prior_i \leq 0.5$. Let $\prior_{\max}\triangleq \max\limits_{i\in [\numitems]} \prior_i$ and $\prior_{\mean}\triangleq \frac{1}{\numitems}\sum_{i=1}^\numitems \prior_i$. Suppose $\priors=(\prior_1,...,\prior_\numitems)$ is $\eta$-bounded for some constant $\eta$. Any strategy that achieves a probability of error $\rightarrow 0$ as $N\rightarrow \infty$  requires
    \begin{align*}
   T(\priors) &= \Omega(\min\{\numitems,\numitems \prior_{\mean}\log \numitems\}) \\
	&=\Omega(\min\{\numitems,\numitems \prior_{\max}\log \numitems\}).
\end{align*}
\end{corollary}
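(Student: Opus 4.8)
The plan is to derive this corollary directly from Theorem~\ref{thm:static_lower_bound}, whose lower bound $\Omega(\min\{\numitems, \numitems \prior_{\min}\log\numitems\})$ already holds for arbitrary priors with $\prior_i \le 0.5$. The only work is to show that, under the $\eta$-bounded hypothesis, one may replace $\prior_{\min}$ by $\prior_{\mean}$ or $\prior_{\max}$ inside the bound without changing its order.

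First I would record the elementary chain of inequalities that $\eta$-boundedness provides. Since $\prior_{\min} \le \prior_{\mean} \le \prior_{\max}$ always holds, and the bounded condition gives $\prior_{\max} \le \eta\, \prior_{\min}$, all three of $\prior_{\min}, \prior_{\mean}, \prior_{\max}$ lie within a factor $\eta$ of one another. Multiplying through by the positive quantity $\numitems \log\numitems$ then yields $\numitems \prior_{\min}\log\numitems \le \numitems \prior_{\mean}\log\numitems \le \eta\, \numitems \prior_{\min}\log\numitems$, and the same sandwich with $\prior_{\max}$ in place of $\prior_{\mean}$.

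The one point requiring care --- and the step I expect to be the main (if minor) obstacle --- is that the clipping at $\numitems$ in the expression $\min\{\numitems, \cdot\}$ is nonlinear, so one cannot simply pull the constant $\eta$ through: in general $\min\{\numitems, \eta x\} \ne \eta\,\min\{\numitems, x\}$. To handle this I would set $a \triangleq \numitems \prior_{\min}\log\numitems$ and $b \triangleq \numitems \prior_{\mean}\log\numitems$, so that $a \le b \le \eta a$, and verify the inequality $\min\{\numitems, a\} \ge \tfrac{1}{\eta}\min\{\numitems, b\}$ by a short three-way case split: when $b \le \numitems$ both minima equal their second arguments and the claim is just $a \ge b/\eta$; when $a \le \numitems < b$ one uses $\eta a \ge b > \numitems$ to get $a > \numitems/\eta = \tfrac{1}{\eta}\min\{\numitems,b\}$; and when $a > \numitems$ both minima equal $\numitems$ and the claim is trivial. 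Since $T(\priors) \ge c\,\min\{\numitems,a\}$ for some constant $c>0$, this immediately gives $T(\priors) \ge \tfrac{c}{\eta}\min\{\numitems,b\}$.

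Finally, combining this inequality with Theorem~\ref{thm:static_lower_bound} gives $T(\priors) = \Omega(\min\{\numitems, \numitems \prior_{\min}\log\numitems\}) = \Omega(\min\{\numitems, \numitems \prior_{\mean}\log\numitems\})$. Running the identical case split once more with $\prior_{\max}$ (which also satisfies $\prior_{\min} \le \prior_{\max} \le \eta\,\prior_{\min}$) shows the min-bound is equivalent to $\Omega(\min\{\numitems, \numitems \prior_{\max}\log\numitems\})$ as well, which yields the second stated equality and completes the proof.
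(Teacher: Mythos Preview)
Your proposal is correct and follows exactly the approach the paper intends: the paper does not give a separate proof of this corollary, remarking only that ``constant factors do not affect the order,'' so deriving it from Theorem~\ref{thm:static_lower_bound} via the sandwich $\prior_{\min}\le \prior_{\mean}\le \prior_{\max}\le \eta\,\prior_{\min}$ is precisely what is meant. Your explicit case split handling the $\min\{\numitems,\cdot\}$ clipping is more careful than anything the paper spells out, but it is not a different route.
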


\subsection{Performance of existing non-adaptive algorithms in the static non-identical priors}
\label{subsec:upper_bounds}
 Suppose $\priors$ is $\eta$-bounded and each $\prior_i\leq 0.5$. The following non-adaptive algorithms can be proved to be order-optimal with respect to the lower bound in \Cref{cor:order_bounded_priors}:
\begin{itemize}[leftmargin = 3mm]
    \item The Coupon Collector Algorithm (CCA) from \cite{prior} for prior vector $\priors$, as discussed in Section~\ref{sec:tradGP}, achieves a probability of error less than $2\numitems^{-\delta}$ with a number of tests less than $4e(1+\delta)\numitems \prior_{\mean} \log \numitems$ (see Theorem 3 in \cite{prior}). As a result, w.r.t to the lower bound in Corollary~\ref{cor:order_bounded_priors}, either CCA is order-optimal (if $N \geq \numitems \prior_{\mean}\log \numitems$)  or individual testing is order optimal (if $N \leq \numitems \prior_{\mean}\log \numitems$).
    \item  As discussed in Section~\ref{sec:tradGP} for the group testing problem with identical priors (say every item is infected with the same probability $\prior'$), a variety of randomized and explicit algorithms have been proposed\footnote{Most of these were considered in the context of combinatorial priors. However, Theorem 1.7 and Theorem 1.8 from \cite{GroupTestingMonograph} imply that any algorithm that attains a vanishing probability of error on the combinatorial priors, also attains a vanishing probability of error on the corresponding i.i.d probabilistic priors.} which achieve a vanishing probability of error with a number of tests $O(\numitems p'\log \numitems)$. From Corollary~\ref{cor:p_error}, any test matrix that achieves a vanishing probability of error with $\priors_{\max}$ should also attain a vanishing probability of error with $\priors$, and as a result $O(\numitems \prior_{\max}\log \numitems)$ tests are sufficient for the prior vector $\priors$. Consequently w.r.t our lower bound in Corollary~\ref{cor:order_bounded_priors}, any of these designs is order optimal (if $N \geq \numitems \prior_{\max}\log \numitems$)  or individual testing is order optimal (if $N \leq \numitems \prior_{\max}\log \numitems$).
\end{itemize}

\subsection{Dynamic testing - bridging the gap}
\label{subsec:dynamic_testing_results}
Given the discussion above, we next show conditions under which the prior probabilities of infections each day (these change everyday) are $\eta$-bounded and are each not more than $0.5$. If these two conditions are satisfied everyday for our discrete-time SIR stochastic block model set-up in Fig.~\ref{fig:problem_form}, then CCA and the other algorithms discussed in Section~\ref{subsec:upper_bounds} are order-optimal for our dynamic testing problem formulation. (see \ref{fig:problem_form}). We first define some notation, building upon the notation in Section~\ref{subsec:dynamic_testing_problem}.
\begin{itemize}[leftmargin=3mm]
	\item $p_{\max}^{(t)}\triangleq \max_{j}p_{j}^{(t)}$, the maximum probability of new infection on day $t$.
	\item $p_{\min}^{(t)}\triangleq \min_{j}p_{j}^{(t)}$, the minimum probability of new infection on day $t$.
\end{itemize}

 \begin{theorem}\label{thm:bounded_proof}
 	Consider the testing-intervention problem in Fig.~\ref{fig:problem_form} where the infections follow the discrete-time SIR stochastic block model $(\numitems,C, q_1,q_2,p_{\init})$. 
 	\begin{itemize}
 		\item[(i)] Suppose $p_{\init}\leq 0.5$, $q_1\leq \frac{1 - 1/\sqrt{2}}{\communitySize}$ and $q_2\leq \frac{1 - 1/\sqrt{2}}{\numitems}$, then $p_j^{(t)}\leq 0.5$. 
 		\item[(ii)] Suppose $\frac{q_1}{q_2}\leq \eta$, then $\frac{p_{\max}^{(t)}}{p_{\min}^{(t)}} \leq \eta$ and as a result the prior vector for each day is $\eta$-bounded.
 	\end{itemize}
 \end{theorem}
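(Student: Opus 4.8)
The plan is to work directly from the closed-form expression $p_j^{(t)} = 1 - (1-q_1)^{I_j^{(t-1)}}(1-q_2)^{\sum_{j'\neq j}I_{j'}^{(t-1)}}$ derived in Section~\ref{subsec:dynamic_testing_problem}, treating the two parts separately and in each case bounding the ``survival'' factor $1-p_j^{(t)}$. For part (i) I would first use the deterministic bounds $I_j^{(t-1)}\le \communitySize$ and $\sum_{j'\neq j}I_{j'}^{(t-1)}\le \numitems$, valid since a community has at most $\communitySize$ members and the population has $\numitems$. As $1-q_1,1-q_2\in(0,1)$, larger exponents only shrink the product, so $1-p_j^{(t)}\ge (1-q_1)^{\communitySize}(1-q_2)^{\numitems}$. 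I would then invoke Bernoulli's inequality $(1-x)^n\ge 1-nx$ together with the hypotheses to get $(1-q_1)^{\communitySize}\ge 1-\communitySize q_1\ge 1/\sqrt2$ and $(1-q_2)^{\numitems}\ge 1-\numitems q_2\ge 1/\sqrt2$; multiplying gives $1-p_j^{(t)}\ge 1/2$. This part is routine --- the thresholds $(1-1/\sqrt2)/\communitySize$ and $(1-1/\sqrt2)/\numitems$ are calibrated precisely so that each factor clears $1/\sqrt2$.

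For part (ii), I would assume the meaningful regime $q_1\ge q_2$ (otherwise $q_1/q_2<1\le\eta$ trivially). Write $a=1-q_1\le b=1-q_2$, $\rho=a/b\le1$, and let $I=\sum_{j'}I_{j'}^{(t-1)}$ be the total number of infectious individuals, so that $1-p_j^{(t)}=a^{I_j^{(t-1)}}b^{I-I_j^{(t-1)}}=b^{I}\rho^{I_j^{(t-1)}}$. Since this is decreasing in $I_j^{(t-1)}$, the extremes $p_{\max}^{(t)}$ and $p_{\min}^{(t)}$ occur at the communities with the largest and smallest infection counts $I_{\max}$ and $I_{\min}$. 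With $d=I_{\max}-I_{\min}\ge0$ and $A_{\min}=1-p_{\min}^{(t)}$, a short rearrangement gives
$$\frac{p_{\max}^{(t)}}{p_{\min}^{(t)}}=1+\frac{A_{\min}\,(1-\rho^{d})}{1-A_{\min}}.$$
The right-hand side is increasing in $A_{\min}$, and $A_{\min}=a^{I_{\min}}b^{I-I_{\min}}\le b^{I}\le b^{d}$ (using $a\le b$ and $I\ge I_{\max}\ge d$), so I would replace $A_{\min}$ by $b^{d}$ to obtain the upper bound $1+\frac{b^{d}(1-\rho^{d})}{1-b^{d}}=1+\frac{b^{d}-a^{d}}{1-b^{d}}$.

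The crux is the scalar inequality $\frac{b^{d}-a^{d}}{1-b^{d}}\le\frac{b-a}{1-b}$ for every integer $d\ge1$ and $0<a\le b<1$, which I would establish via the geometric-sum identities $b^{d}-a^{d}=(b-a)\sum_{i=0}^{d-1}b^{d-1-i}a^{i}$ and $1-b^{d}=(1-b)\sum_{i=0}^{d-1}b^{i}$: the claim reduces to $\sum_{i=0}^{d-1}b^{d-1-i}a^{i}\le\sum_{i=0}^{d-1}b^{i}$, which holds term-by-term because $b^{d-1-i}a^{i}\le b^{d-1}\le b^{i}$ for $a\le b$ and $i\le d-1$. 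Since $\frac{b-a}{1-b}=\frac{q_1}{q_2}-1$, chaining the bounds gives $p_{\max}^{(t)}/p_{\min}^{(t)}\le q_1/q_2\le\eta$, and $\eta$-boundedness of each day's prior vector then follows by definition. I expect part (ii) to be the main obstacle: the essential insight is that the two extreme communities share the same total $I$ and differ only in how many sources act at rate $q_1$ versus $q_2$, which both licenses the rewriting above and reveals that the bound is tight precisely when $I=d=1$ (a single infectious individual, seen at rate $q_1$ by its own community and $q_2$ by the rest), so that $q_1/q_2$ cannot be improved.
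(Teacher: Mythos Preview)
Your proof is correct. Part (i) is essentially identical to the paper's argument (Bernoulli's inequality plus the calibrated thresholds so that each factor clears $1/\sqrt2$).

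For part (ii), your route differs from the paper's. The paper does not identify the extremal communities; instead it bounds every $p_j^{(t)}$ uniformly from above by $1-\rho^{I_{\max}}(1-q_2)^{I}$ and from below by $1-(1-q_2)^{I}$ (with $I=\sum_{j'}I_{j'}^{(t-1)}$ and $\rho=(1-q_1)/(1-q_2)$), forms the ratio, and then invokes two calculus facts---that $x\mapsto(1-\kappa x)/(1-x)$ is increasing and that $x\mapsto(1-(1-q_1)^x)/(1-(1-q_2)^x)$ is decreasing---to reduce first the exponent $I$ to $I_{\max}$ and then $I_{\max}$ to $1$, landing at $q_1/q_2$. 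You instead pin down the extremal communities explicitly, rewrite the ratio as $1+A_{\min}(1-\rho^d)/(1-A_{\min})$, bound $A_{\min}\le b^d$, and settle the resulting scalar inequality $(b^d-a^d)/(1-b^d)\le(b-a)/(1-b)$ by geometric-series factorization and a clean term-by-term comparison. Your argument is purely algebraic and makes the tight case ($I=d=1$, a single infectious individual) visible; the paper's argument is more calculus-driven (the monotonicity of $f_3$ requires a separate derivative computation in their appendix) but packages the work into reusable monotonicity lemmas. Both reach the sharp bound $q_1/q_2$. One small omission: you should note that when $d=0$ all $p_j^{(t)}$ coincide and the ratio is $1$, so the scalar inequality is only needed for $d\ge1$ (where $b^d<1$ is guaranteed).
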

 \begin{proof}
 	We prove (i) first. We first have $p_j^{(0)} = p_{\init}\leq 0.5$. For $t \geq 1$, we have
 	\begin{align*}
 		p_j^{(t)}&=1-(1-q_1)^{I_j^{(t-1)}}(1-q_2)^{\sum_{j'\neq j}I^{(t-1)}_{j'}}\\
 		&\overset{(a)}{\leq} 1-(1-q_1)^{\communitySize}(1-q_2)^{\numitems}\\
 		&\overset{(b)}{\leq} 1-(1-\communitySize q_1)(1-\numitems q_2) \overset{(c)}{\leq} 0.5,
 	\end{align*}
 	where in $(a)$ we used the fact that the total number of infections inside a community and overall cannot be greater than $\communitySize$ and $\numitems$, respectively; $(b)$ follows because of the algebraic inequality $(1+x)^y \geq 1+xy$ if $x\geq -1$ and $y\notin (0,1)$; in $(c)$ we used our assumptions about $q_1$ and $q_2$.

    We next prove (ii). Since $q_1\geq q_2$ in our model, we have
	\begin{align*}
	    p_{j}^{(t)}&=1-(1-q_1)^{I_{j}^{(t-1)}}(1-q_2)^{\sum_{j'\neq j}I^{(t-1)}_{j'}}\\
	    &= 1-\left(\frac{1-q_1}{1-q_2}\right)^{I_{j}^{(t-1)}}(1-q_2)^{\sum_{j'}I^{(t-1)}_{j'}}\\
	    &\leq  1-\left(\frac{1-q_1}{1-q_2}\right)^{\max_j I^{(t-1)}_j}(1-q_2)^{\sum_{j'}I^{(t-1)}_{j'}},\numberthis
	    \label{eq:bounded_proof1_max}
	\end{align*}
 where $\max_j I^{(t)}_j$ is simply the maximum number of infections over all communities. 
    Likewise, 
    \begin{align*}
	    p_{j}^{(t)}&=1-(1-q_1)^{I_{j}^{(t-1)}}(1-q_2)^{\sum_{j'\neq j}I^{(t-1)}_{j'}}\\
	    &\overset{(a)}{\geq}  1-(1-q_2)^{\sum_{j'}I^{(t-1)}_{j'}},\numberthis
	    \label{eq:bounded_proof1_min}
	\end{align*}
	where in $(a)$ we used $q_2\leq q_1$.
Combining \eqref{eq:bounded_proof1_max} and~\eqref{eq:bounded_proof1_min} we have
\begin{align*}	
	\frac{p_{\max}^{(t)}}{p_{\min}^{(t)}} 
	&= \frac{1 -\left(\frac{1-q_1}{1-q_2}\right)^{\max_j I^{(t-1)}_j}(1-q_2)^{\sum_{j'}I^{(t-1)}_{j'}}} {1-(1-q_2)^{\sum_{j'}I^{(t-1)}_{j'}}} \\
	& \overset{(a)}{\leq} \frac{1-\left(\frac{1-q_1}{1-q_2}\right)^{\max_j I^{(t-1)}_j}(1-q_2)^{\max_j I^{(t-1)}_j}}{1-(1-q_2)^{\max_j I^{(t-1)}_j}}\\
	& =
	\frac{1-(1-q_1)^{\max_j I^{(t-1)}_j}}{1-(1-q_2)^{\max_j I^{(t-1)}_j}} \overset{(b)}{\leq}
	\frac{q_1}{q_2} \leq \eta.
\end{align*}
where $(a)$ follows from the following facts: 
the function $f_1(x) = \frac{1-\kappa x}{1-x}$ is increasing for $\kappa \in (0,1)$, 
the function $f_2(x) = (1 - q_2)^x$ is decreasing for $q_2 \in (0,1)$, 
and the sum $\sum_{j'}I^{(t-1)}_{j'}$ is lower bounded by $\max_j I^{(t-1)}_j$;
and $(b)$ follows from the fact that the function $f_3(x) =  \frac{1-(1-q_1)^{x}}{1-(1-q_2)^{x}}$ is decreasing in $x \geq 1$ for $q_1 \geq q_2$, and therefore the maximum of the ratio is obtained for $\max_j I^{(t-1)}_j = 1$.  
All proofs of the above statements are provided in Appendix~\ref{app:aux}.
\end{proof}

Finally, we make three remarks related to the results introduced in this section.\\
\textbf{Remark 1.} Both assumptions (i) and (ii) on the parameters in Theorem~\ref{thm:bounded_proof} will hold true when the number of communities is a constant, i.e., the size of each community is $C=\Theta(N)$ (as is the case when the population is well-mixed, or if we just consider a single community); assumption (i) does not require $C=\Theta(N)$. In our simulations, we  observed empirically that  assumption (ii) also holds when $C<<N$; we do not have a formal proof of Theorem~\ref{thm:bounded_proof} for this case however. \\
\textbf{Remark 2.} Our results hold not just for the specific model introduced in \Cref{sec:setup} (where in particular we assume symmetric intra and inter community transmissions) but for any underlying community structure where the two conditions (bounded prior vectors and the value of each prior not exceeding \sfrac{1}{2}) are satisfied. For example, one could have a model where an infected individual can transmit the infection only to a subset of his fellow community members with probability $q_1$ (he/she cannot transmit to the rest of the individuals in his/her community) and only to a subset of individuals outside his/her community with probability $q_2$. For this example model, the conditions in Theorem~\ref{thm:bounded_proof} are sufficient to prove the two requirements on the prior vector.\\
\textbf{Remark 3.} Intervention is a crucial aspect for our results to hold true. Without intervention in our dynamic model, many of the prior probabilities would be greater than $\sfrac{1}{2}$ and our requirements on the prior vector would not be satisfied.

\section{Numerical results} \label{sec:numerical}
In this section, we show illustrative numerical results on the necessary and sufficient number of tests required for the discrete-time SIR stochastic block model. We next describe the experimental set-up.
\begin{itemize}[leftmargin = 3mm]
    \item We  simulate multiple instances (or \textit{trajectories}) of the pipeline in Fig.~\ref{fig:TI_chain} where the infections follow the discrete-time SIR stochastic block model $(N,C,p_{\init},q_1,q_2)$, and for different testing strategies. We simulate 200 trajectories and in Fig.~\ref{fig:numerics}, plot the daily average of the quantities across these trajectories.
    \item For each of these testing strategies, we empirically find the number of tests required on each day to identify all the infections on the previous day. To do this, on each day for a given trajectory, we start with 1000 tests and decrease this number (at a certain granularity) until the testing strategy makes a mistake. The smallest number of tests for which the strategy worked is plotted.
    \item On the other hand, we also plot the \textit{entropy lower bound} in \Cref{lemma:entropy_LB}; it is easy to estimate this for our model via Monte-Carlo approximations. This bound holds for any set of values for $p_i^{(t-1)}$, regardless of whether the conditions required for Theorem~\ref{thm:bounded_proof} hold or not. The reason we use the entropy bound instead of our lower bound in Theorem~\ref{thm:static_lower_bound} is that the entropy bound was numerically observed to be larger. Indeed, the lower bound in Theorem~\ref{thm:static_lower_bound} contains some accompanying hidden constants which are small when used for our particular choice of $N$. 
\end{itemize}

\begin{figure}[t]
\centering
\captionsetup[subfigure]{margin=10pt}
\subcaptionbox{$(N,C,p_{\init},q_1,q_2) = (1000,20,0.02,0.03,0.0004)$.\label{fig:numerics1} \vspace{0.5cm}}
{
\begin{tikzpicture}

\definecolor{color0}{rgb}{0.12156862745098,0.466666666666667,0.705882352941177}
\definecolor{color1}{rgb}{1,0.498039215686275,0.0549019607843137}
\definecolor{color2}{rgb}{0.172549019607843,0.627450980392157,0.172549019607843}
\definecolor{color3}{rgb}{0.83921568627451,0.152941176470588,0.156862745098039}
\definecolor{color4}{rgb}{0.580392156862745,0.403921568627451,0.741176470588235}
\definecolor{color5}{rgb}{0.549019607843137,0.337254901960784,0.294117647058824}

\begin{axis}[
width = 0.9\columnwidth,
height = 0.6\columnwidth,
legend cell align={left},
legend style={fill opacity=0.8, draw opacity=1, text opacity=1, at={(0.7,0.5)}, anchor=center, draw=white!80!black},
tick align=outside,
tick pos=left,
ylabel={Number of tests needed},
x grid style={white!69.0196078431373!black},
xmajorgrids,
xmin=-2.45, xmax=51.45,
xminorgrids,
xtick style={color=black},
y grid style={white!69.0196078431373!black},
ymajorgrids,
ymin=-50, ymax=1050,
yminorgrids,
ytick style={color=black}
]

\addplot [very thick, color1]
table {%
0 1000
1 980.13
2 961.66
3 945.02
4 930.8
5 919.03
6 909.34
7 901.07
8 894.41
9 889.11
10 885.03
11 882.17
12 879.9
13 877.75
14 876.13
15 874.87
16 873.96
17 873.24
18 872.58
19 872.13
20 871.81
21 871.62
22 871.49
23 871.4
24 871.33
25 871.27
26 871.21
27 871.15
28 871.12
29 871.09
30 871.07
31 871.05
32 871.01
33 871
34 870.99
35 870.99
36 870.99
37 870.99
38 870.99
39 870.99
40 870.99
41 870.99
42 870.99
43 870.99
44 870.99
45 870.99
46 870.99
47 870.99
48 870.99
49 870.99
};
\addlegendentry{Complete}
\addplot [very thick, color2]
table {%
0 255
1 243.77
2 221.53
3 195.66
4 169.31
5 143.3
6 126.6
7 99.48
8 83.39
9 74.71
10 67.09
11 63.34
12 49.9
13 48.89
14 36.62
15 35.07
16 23.18
17 42.55
18 36.33
19 26.67
20 25.17
21 23.2
22 14.64
23 12.18
24 12.13
25 11.68
26 11.67
27 11.25
28 11.25
29 11.25
30 11.25
31 11.25
32 11.25
33 11.25
34 11.25
35 11.25
36 11.25
37 11.25
38 11.25
39 11.25
40 11.25
41 11.25
42 11.25
43 11.25
44 11.25
45 11.25
46 11.25
47 11.25
48 11.25
49 11.25
};
\addlegendentry{Rnd. Grp. mean}
\addplot [very thick, color3]
table {%
0 319
1 292.19
2 263.53
3 225.03
4 188.54
5 171.75
6 141
7 129.39
8 105.58
9 93.71
10 82.48
11 65.54
12 58.85
13 39.61
14 35.61
15 30.33
16 26.04
17 21.87
18 20.78
19 19.3
20 19.66
21 15.65
22 13.37
23 12.43
24 11.71
25 11.63
26 11.18
27 11.52
28 10.51
29 10.51
30 10.51
31 10.51
32 10.51
33 10.51
34 10.51
35 10.51
36 10.51
37 10.51
38 10.51
39 10.51
40 10.51
41 10.51
42 10.51
43 10.51
44 10.51
45 10.51
46 10.51
47 10.51
48 10.51
49 10.51
};
\addlegendentry{CCA}
\addplot [very thick, color4]
table {%
0 253
1 379.24
2 401.39
3 374.44
4 341.25
5 305.91
6 288.34
7 222.55
8 193.55
9 158.2
10 142.9
11 118.35
12 110.26
13 84.67
14 70.53
15 52.15
16 39.63
17 25.29
18 22.79
19 16.77
20 15.86
21 15.57
22 12.25
23 12.25
24 12.25
25 12.25
26 12.25
27 12.25
28 12.25
29 12.25
30 12.25
31 12.25
32 12.25
33 12.25
34 12.25
35 12.25
36 12.25
37 12.25
38 12.25
39 12.25
40 12.25
41 12.25
42 12.25
43 12.25
44 12.25
45 12.25
46 12.25
47 12.25
48 12.25
49 12.25
};
\addlegendentry{Rnd. Grp. max.}
\addplot [very thick, color5]
table {%
0 141.440542541821
1 122.462374972564
2 108.90236610893
3 95.9660005827493
4 82.1748197130668
5 68.6664428854234
6 57.0567712815792
7 48.4751991378954
8 39.7350801988969
9 32.0986044705848
10 25.1104483262534
11 17.7991796578127
12 14.31014550562
13 13.5919728503754
14 10.258799700635
15 7.91599053363005
16 5.79711564222216
17 4.67602209205896
18 4.2758349514134
19 2.85547358065635
20 1.99016637224024
21 1.17674811751419
22 0.827960695851604
23 0.566902209505912
24 0.407337331862889
25 0.344860648808978
26 0.361679039560578
27 0.366413631226672
28 0.196928093605589
29 0.19098005711792
30 0.140310131679218
31 0.145731039033794
32 0.24927616805268
33 0.0696914770706878
34 0.077212549007092
35 0
36 0
37 0
38 0
39 0
40 0
41 0
42 0
43 0
44 0
45 0
46 0
47 0
48 0
49 0
};
\addlegendentry{Lower bound}
\end{axis}

\end{tikzpicture}}
\subcaptionbox{$(N,C,p_{\init},q_1,q_2) = (1000,50,0.02,0.012,0.0004)$.\label{fig:numerics2}}
{
\begin{tikzpicture}

\definecolor{color0}{rgb}{0.12156862745098,0.466666666666667,0.705882352941177}
\definecolor{color1}{rgb}{1,0.498039215686275,0.0549019607843137}
\definecolor{color2}{rgb}{0.172549019607843,0.627450980392157,0.172549019607843}
\definecolor{color3}{rgb}{0.83921568627451,0.152941176470588,0.156862745098039}
\definecolor{color4}{rgb}{0.580392156862745,0.403921568627451,0.741176470588235}
\definecolor{color5}{rgb}{0.549019607843137,0.337254901960784,0.294117647058824}

\begin{axis}[
width = 0.9\columnwidth,
height = 0.6\columnwidth,
legend cell align={left},
legend style={fill opacity=0.8, draw opacity=1, text opacity=1, at={(0.7,0.5)}, anchor=center, draw=white!80!black},
tick align=outside,
tick pos=left,
ylabel={Number of tests needed},
x grid style={white!69.0196078431373!black},
xmajorgrids,
xmin=-2.45, xmax=51.45,
xminorgrids,
xtick style={color=black},
y grid style={white!69.0196078431373!black},
ymajorgrids,
ymin=-50, ymax=1050,
yminorgrids,
ytick style={color=black}
]

\addplot [very thick, color1]
table {%
0 1000
1 980.19
2 961.61
3 945.02
4 930.1
5 917.64
6 907.26
7 898.46
8 891.03
9 884.71
10 879.44
11 875.52
12 872.34
13 869.71
14 867.74
15 866.28
16 865.03
17 864.04
18 863.14
19 862.47
20 861.88
21 861.4
22 861.1
23 860.8
24 860.57
25 860.37
26 860.21
27 860.15
28 860.07
29 860.03
30 860
31 859.96
32 859.92
33 859.87
34 859.84
35 859.83
36 859.83
37 859.83
38 859.83
39 859.83
40 859.83
41 859.83
42 859.83
43 859.83
44 859.83
45 859.83
46 859.83
47 859.83
48 859.83
49 859.83
};
\addlegendentry{Complete}
\addplot [very thick, color2]
table {%
0 257
1 241.63
2 216.83
3 192.39
4 185.17
5 165.92
6 142.13
7 128.5
8 111.71
9 102.08
10 85.61
11 79.25
12 68.97
13 65.23
14 67.59
15 35.98
16 31.28
17 27.03
18 49.72
19 22.3
20 20.43
21 19.63
22 18.41
23 21.61
24 15.16
25 14.05
26 12.45
27 19.99
28 14.71
29 14.1
30 20.12
31 12.4
32 11.89
33 18.82
34 12.04
35 14.33
36 13.34
37 11.98
38 12.48
39 11.72
40 11.6
41 11.35
42 10.95
43 10.54
44 10.54
45 10.54
46 10.54
47 10.54
48 10.54
49 10.54
};
\addlegendentry{Rnd. Grp. mean}
\addplot [very thick, color3]
table {%
0 318
1 281.54
2 285.86
3 246.33
4 211.2
5 177.87
6 155.24
7 139.74
8 123.81
9 100.01
10 85.51
11 76.77
12 67.51
13 63.33
14 52.49
15 47.01
16 37.68
17 28.26
18 25.22
19 23.92
20 19.51
21 18.96
22 19.75
23 14.93
24 14.14
25 14.75
26 12.68
27 12.67
28 12.66
29 12.22
30 12.22
31 12.22
32 12.22
33 12.22
34 12.22
35 12.22
36 12.22
37 12.22
38 12.22
39 12.22
40 12.22
41 12.22
42 12.22
43 12.22
44 12.22
45 12.22
46 12.22
47 12.22
48 12.22
49 12.22
};
\addlegendentry{CCA}
\addplot [very thick, color4]
table {%
0 242
1 292.52
2 288.07
3 267.08
4 236.02
5 208.9
6 184.27
7 160.84
8 142.89
9 120.6
10 102.01
11 85.81
12 73.37
13 61.67
14 50.06
15 42.15
16 35.93
17 35.35
18 31.04
19 24.35
20 19
21 15.76
22 13.63
23 14.11
24 14.37
25 13.54
26 12.45
27 11.66
28 11.64
29 11.62
30 11.23
31 11.23
32 11.23
33 11.23
34 11.23
35 11.23
36 11.23
37 11.23
38 11.23
39 11.23
40 11.23
41 11.23
42 11.23
43 11.23
44 11.23
45 11.23
46 11.23
47 11.23
48 11.23
49 11.23
};
\addlegendentry{Rnd. Grp. max.}
\addplot [very thick, color5]
table {%
0 141.440542541821
1 128.496048708674
2 116.876779033311
3 103.703904463099
4 91.9233872932414
5 76.7680939710161
6 64.5568988065168
7 55.3427946363966
8 46.9858009481585
9 40.0828965048352
10 33.8619460664428
11 25.7621175783538
12 20.9177323685867
13 16.8371194476358
14 12.7535531411561
15 9.52296846222672
16 8.14813276954168
17 6.60805082943501
18 5.92017814446982
19 4.30669391566364
20 3.80643929577771
21 3.06327136539248
22 2.05492279919286
23 2.09688436305676
24 1.57109334063125
25 1.33314981519294
26 1.03489175813224
27 0.395201438818143
28 0.541676315687007
29 0.282185729737887
30 0.217655195291139
31 0.269227200413146
32 0.27633228460526
33 0.339566555916878
34 0.216538733884447
35 0.0752305250396269
36 0
37 0
38 0
39 0
40 0
41 0
42 0
43 0
44 0
45 0
46 0
47 0
48 0
49 0
};
\addlegendentry{Lower bound}
\end{axis}

\end{tikzpicture}}
\caption{\small Experimental results. We plot the average number of tests required by each strategy to identify the infection statuses of all non-isolated individuals each day for 2 different sets of parameters.}
\label{fig:numerics}
\end{figure}

We compare the following testing strategies in our numerical simulations.
\begin{itemize}[leftmargin =3mm]
    \item \textbf{Complete testing}. We test every non-isolated individual remaining in the population each day.
    \item \textbf{Coupon Collector Algorithm (CCA) from \cite{prior}.} We showed the order-optimality of this algorithm for the dynamic testing problem at the beginning of Section~\ref{subsec:dynamic_testing_results}. In short, on each day, the CCA algorithm constructs a random non-adaptive test design which depends on $p_j^{(t)}$. The idea is to place objects which are less likely to be infected in more number of tests and vice-versa. We refer the reader to \cite{prior} for the exact description of the algorithm.
    \item \textbf{Random group testing for max probability (Rnd. Grp. max.)} Here we construct a randomized design assuming that each individual has a prior probability of infection $p_{\max}^{(t)}$. From Corollary~\ref{cor:p_error}, such a design must also work for the actual priors $p_j^{(t)}$. We construct a constant column-weight design (see e.g. \cite{ncc-Johnson}) where each individual is placed in $L= \lfloor \sfrac{T}{(Np_{\max}^{(t)}\log 2 )} \rfloor$ tests. Such a test design achieves a vanishing probability of error with $O(Np_{\max}^{(t)} \log N)$ tests (see for example \cite{ncc-Johnson} for a proof), and hence is order-optimal under the conditions in Theorem~\ref{thm:bounded_proof}.
    \item \textbf{Random group testing for mean probability (Rnd. Grp. mean)} Here we construct a randomized design assuming that each individual has a prior probability of infection $p_{\mean}^{(t)}$, where $p_{\mean}^{(t)}$ is defined as the mean prior probability of infection across all individuals. Unlike Rnd. Grp. max., there is no guarantee on how many tests are needed by such a design to identify the infection statuses of all individuals. However, the numerical results in Fig.~\ref{fig:numerics} show that such a design  requires fewer tests than CCA or Rnd. Grp. max. designs.
\end{itemize}

The numerical results in Fig.~\ref{fig:numerics} are illustrated for two different parameter values of the discrete-time SIR stochastic block model. In both cases, we see that Rnd. Grp. mean requires the least number of tests to identify the infection statuses of all non-isolated individuals. Moreover, the number of tests required by all three testing strategies considered is much less than the number required by complete testing. In fact the numerics in Fig.~\ref{fig:numerics} indicate that if we use a number of tests equal to $\sfrac{1}{5}$ of number of tests required for complete individual testing,  all these algorithms would achieve the same performance as complete individual testing, at least for the particular examples that we considered. 

{A natural follow-up question to ask is if  there is a systematic way to choose the number of tests that need to be administered, given the upper bounds discussed in \Cref{subsec:upper_bounds}. In Appendix~\ref{app:heuristics}, we discuss one such heuristic and show that it achieves close-to-complete-testing performance.}


\section{Conclusions and open questions}
In this work we proposed the problem of dynamic group testing which asks the question of how to continually test given that infections spread during the testing period. Our numerical results answer the question we started with -- in the dynamic testing problem formulation, given a day of testing delay, is it possible to achieve close to complete testing performance with significantly fewer number of tests? The answer is yes, and in this paper we not only showed numerical evidence supporting this fact, but also gave theoretical bounds on the optimal number of tests needed in order to achieve this.

 Although we gave theoretical bounds on the optimal number of tests needed each day, many open questions remain. In particular, it would be interesting to study the same problem when tests are noisy, or when we can use other models of group tests such as the one in~\cite{ac-dc}, or simply when one cannot perfectly learn the states of all individuals on a testing day. In addition, it would also be of interest to study/use other test designs, for example like the ones in \cite{price2020fast, bondorf2020sublinear,bernoulli_testing1,bernoulli_testing2,PhaseTrans-SODA16,coja-oghlan19,ncc-Johnson,Nonadaptive-1,Nonadaptive-2,Nonadaptive-3,Saffron}. Finally, it remains open to see how these results translate to the continuous-time SIR stochastic network model of \cite{mathOfEpidemicsOnNetworks}.
\section{Acknowledgements}
This work was supported in part by NSF grants \#2007714, \#1705077 and UC-NL grant LFR-18-548554. We also thank Katerina Argyraki for her ongoing support and the valuable discussions we have had about this project.

\bibliographystyle{IEEEtran}
\bibliography{bibliography}

\appendices
\section{Comparison of discrete and continuous-time SIR models}
\label{app:discrete_vs_cont}
 The well-studied continuous-time SIR stochastic network model from \cite{mathOfEpidemicsOnNetworks} has been the main motivation for our discrete-time SIR stochastic block model. In fact, the discrete-time SIR stochastic block model described in \Cref{subsec:model} can be considered as a discretized version of the continuous-time SIR stochastic network model over the weighted graph, where 2 individuals belonging to the same community are connected by an edge with weight $q_1$ and 2 individuals belonging to different communities are connected by an edge with weight $q_2$, and recoveries occur at the rate $r$/day -- i.e., an infected individual transmits the disease to a susceptible individual in the same community at the rate $q_1$/day and to a susceptible individual in a different community at the rate $q_2$/day. In Fig.~\ref{fig:cont_vs_discrete}, we compare the continuous-time model above and the discrete-time model for a few example values of $q_1$, $q_2$ and $r$ for illustration.

\begin{figure}
    \centering
    \input{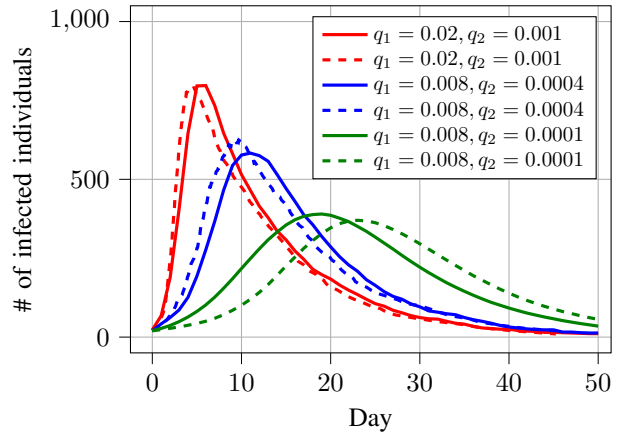}
    \caption{\small Continuous vs discrete-time model. Continuous model in dashed and discrete model in solid curves. Recovery probability $r=0.1$ in all cases.}
    \label{fig:cont_vs_discrete}
\end{figure}

We make a few observations:\\
$\bullet$  The progression of the disease in the discrete-time and continuous-time models, though not identical, follow a similar pattern, justifying the use of the discrete-time model.\\
$\bullet$ In both the models, $\sfrac{1}{q_1}$ is the expected time for an infected individual to transmit the disease to a susceptible individual in the same community, $\sfrac{1}{q_2}$ is the expected time for an infected individual to transmit the disease to a susceptible individual in a different community and $\sfrac{1}{r}$ is the expected time for an infected individual to recover.\\
$\bullet$ In the continuous-time model, an individual can get infected and recovered in the same day, whereas this is not possible in our discrete-time model (infected individuals can recover starting from the day after they are infected).

\section{Proof of Lemma~\ref{lem:MAP_decoder_optimality}}
\label{app:MAP_decoder_opt_proof}
The optimality of the MAP decoder is a standard result in statistics and signal processing. We however give the proof in the context of our problem, for completeness.

\LemmaMAPDecoder*

\begin{proof}
As stated at the beginning of  \Cref{sec:results}, 
the Probability of error for a test matrix, decoder pair under given the priors is \begin{align*}\perror(\testmatrix, R;\priors)
	&\triangleq \expect_{\infectionstatuses\sim \priors} \mathbbm{1}\{R(\testmatrix(\infectionstatuses))\neq \infectionstatuses\} \\
	&= \expect_{\mathbf Y} \; \expect_{\infectionstatuses | \mathbf Y } \mathbbm{1}\{R(\testmatrix(\infectionstatuses)) \neq \infectionstatuses \},
\end{align*}
where $\mathbf Y$ is the set of test results.
For the MAP decoder, the term inside $\expect_{\mathbf Y}$ is
\begin{align*}
	\expect_{\infectionstatuses | \mathbf Y}&  \mathbbm{1}\{\mapdecoder(\mathbf Y ;\testmatrix,\priors))\neq \infectionstatuses \} \\
	&= \expect_{\infectionstatuses |\mathbf Y} \mathbbm{1}\{\argmax\limits_{\infectionstatuses:\testmatrix(\infectionstatuses)=\mathbf Y} \Pr(\infectionstatuses;\priors) \neq \infectionstatuses \} \\
	&= \sum_{ \defectives \in \left[\numitems\right] } 
	\Pr\left(\infectionstatuses(\defectives) | \mathbf Y ; \priors\right)\\  &\hspace{1.2cm} \cdot \mathbbm{1}\{\argmax\limits_{\infectionstatuses(\defectives):\testmatrix(\infectionstatuses(\defectives))=\mathbf Y} \Pr(\infectionstatuses(\defectives);\priors) \neq \infectionstatuses(\defectives)\}\\
	&=1 - \Pr\left(\argmax\limits_{\infectionstatuses:\testmatrix(\infectionstatuses)=\mathbf Y} \Pr(\infectionstatuses |\mathbf Y;\priors) \right)\\
	& = 1- \max_{\infectionstatuses:\testmatrix(\infectionstatuses)=\mathbf Y} \frac{\Pr (\infectionstatuses; \priors)}{\Pr(\mathbf Y)} \numberthis \label{eq:MAP_decoder_optimality:MAP}.
\end{align*}

Similarly, for any decoder $R$, we have
\begin{align*}
	\expect_{\infectionstatuses | \mathbf Y}&  \mathbbm{1}\{R(\mathbf Y) \neq \infectionstatuses \} \\
	&= \sum_{ \defectives \in \left[\numitems\right] } 
	\Pr\left(\infectionstatuses(\defectives) | \mathbf Y ; \priors\right) \cdot \mathbbm{1}\{R(\mathbf Y) \neq \infectionstatuses\}\\
	&=1 - \Pr\left(R(\mathbf Y) |  \mathbf Y;\priors\right) \geq 1- \max_{\infectionstatuses:\testmatrix(\infectionstatuses)=\mathbf Y} \frac{\Pr (\infectionstatuses; \priors)}{\Pr( \mathbf Y)} \numberthis \label{eq:MAP_decoder_optimality:R}.
\end{align*}
Comparing \eqref{eq:MAP_decoder_optimality:MAP} and \eqref{eq:MAP_decoder_optimality:R} concludes the proof.
\end{proof}

\section{Proof of Lemma~\ref{lem:map_decoder_increasing}}
\label{app:MAP_decoder_inc_proof}
\LemmaMAPSparser*
\begin{proof}
We first state the trivial case where $\mathbbm{1}\left\{\mapdecoder(\testmatrix(\infectionstatuses(\defectives));\testmatrix,\priors) \neq \infectionstatuses(\defectives)\right\} = 0$. 
Under that assumption, the inequality of \Cref{lem:map_decoder_increasing} always holds.

We then consider the case where $\mathbbm{1}\left\{\mapdecoder(\testmatrix(\infectionstatuses(\defectives));\testmatrix,\priors) \neq \infectionstatuses(\defectives)\right\} = 1$, i.e., the MAP decoder makes an error when the defective set is $\defectives$.
In that case, one of the two situations is possible:
\begin{itemize}[leftmargin=5mm]
    \item[(1)] there exists some set $\defectives' \neq \defectives$, such that $\testmatrix (\infectionstatuses (\defectives')) = \testmatrix (\infectionstatuses (\defectives))$ and
$	\Pr(\infectionstatuses (\defectives');\priors) > \Pr(\infectionstatuses (\defectives);\priors)$ or
    \item[(2)] there exists some set $\defectives' \neq \defectives$, such that $\testmatrix (\infectionstatuses (\defectives')) = \testmatrix (\infectionstatuses (\defectives))$ and
$	\Pr(\infectionstatuses (\defectives');\priors) = \Pr(\infectionstatuses (\defectives);\priors)$ and $\defectives'$ is lexicographically earlier than $\defectives$. 
\end{itemize}
Hence MAP identifies incorrectly $\defectives'$ as the defective set given that $\defectives$ was the true defective set. We prove assuming that the first situation occurred; the proof follows identical arguments for the second situation.

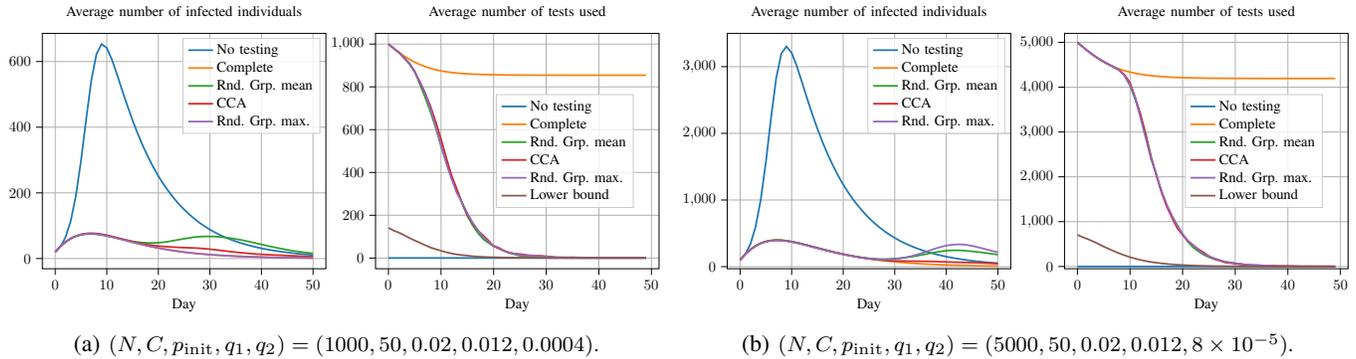
\begin{figure*}
\centering
\captionsetup[subfigure]{margin=10pt}
\subcaptionbox{\footnotesize $(N,C,p_{\init},q_1,q_2) = (1000,50,0.02,0.012,0.0004)$.\label{fig:heuristic1} }
{\scalebox{0.55}{
\begin{tikzpicture}

\definecolor{color0}{rgb}{0.12156862745098,0.466666666666667,0.705882352941177}
\definecolor{color1}{rgb}{1,0.498039215686275,0.0549019607843137}
\definecolor{color2}{rgb}{0.172549019607843,0.627450980392157,0.172549019607843}
\definecolor{color3}{rgb}{0.83921568627451,0.152941176470588,0.156862745098039}
\definecolor{color4}{rgb}{0.580392156862745,0.403921568627451,0.741176470588235}
\definecolor{color5}{rgb}{0.549019607843137,0.337254901960784,0.294117647058824}

\begin{groupplot}[group style={group size=2 by 1, horizontal sep=1.2cm}]
\nextgroupplot[
legend cell align={left},
legend style={fill opacity=0.8, draw opacity=1, text opacity=1, draw=white!80!black},
tick align=outside,
tick pos=left,
title={Average number of infected individuals},
x grid style={white!69.0196078431373!black},
xlabel={Day},
xmajorgrids,
xmin=-2.5, xmax=52.5,
xminorgrids,
xtick style={color=black},
y grid style={white!69.0196078431373!black},
ymajorgrids,
ymin=-31.0767, ymax=685.8747,
yminorgrids,
ytick style={color=black}
]
\addplot [very thick, color0]
table {%
0 19.688
1 35.898
2 64.962
3 113.984
4 190.712
5 297.756
6 423.926
7 542.932
8 622.988
9 653.286
10 641.47
11 605.706
12 559.956
13 511.474
14 465.08
15 420.98
16 380.408
17 343.174
18 309.464
19 278.972
20 251.244
21 226.698
22 204.316
23 183.956
24 165.584
25 149.206
26 134.54
27 120.82
28 108.602
29 98.008
30 88.08
31 79.336
32 71.45
33 64.374
34 58.146
35 52.358
36 47.166
37 42.556
38 38.338
39 34.516
40 30.984
41 27.952
42 25.216
43 22.802
44 20.494
45 18.528
46 16.602
47 14.958
48 13.478
49 12.206
50 10.93
};
\addlegendentry{No testing}
\addplot [very thick, color1]
table {%
0 19.844
1 36.866
2 50.506
3 60.688
4 67.892
5 72.698
6 74.88
7 75.342
8 74.468
9 72.388
10 69.638
11 65.978
12 61.944
13 57.724
14 53.588
15 49.416
16 45.582
17 41.834
18 38.35
19 34.918
20 31.816
21 29.086
22 26.47
23 24.03
24 21.74
25 19.608
26 17.73
27 16.052
28 14.588
29 13.316
30 12.076
31 10.852
32 9.756
33 8.776
34 7.94
35 7.128
36 6.44
37 5.832
38 5.274
39 4.78
40 4.29
41 3.908
42 3.53
43 3.15
44 2.888
45 2.558
46 2.308
47 2.114
48 1.916
49 1.712
50 1.522
};
\addlegendentry{Complete}
\addplot [very thick, color2]
table {%
0 20.138
1 36.786
2 50.104
3 60.64
4 67.888
5 72.174
6 74.382
7 75.144
8 73.89
9 71.656
10 68.57
11 65.348
12 61.944
13 58.564
14 55.418
15 52.306
16 50.144
17 48.31
18 47.546
19 47.416
20 48.066
21 49.532
22 51.92
23 54.638
24 57.5
25 60.22
26 62.712
27 65.182
28 66.488
29 67.27
30 67.2
31 66.85
32 65.76
33 63.988
34 62.008
35 59.492
36 56.658
37 53.486
38 50.076
39 46.44
40 43.004
41 39.374
42 35.832
43 32.538
44 29.374
45 26.472
46 23.73
47 21.404
48 19.256
49 17.416
50 15.628
};
\addlegendentry{Rnd. Grp. mean}
\addplot [very thick, color3]
table {%
0 20.044
1 37.156
2 50.912
3 61.256
4 68.392
5 73.52
6 76.29
7 77.21
8 76.246
9 74.1
10 71.12
11 67.462
12 63.558
13 59.478
14 55.374
15 51.344
16 47.676
17 44.758
18 42.012
19 39.952
20 38.126
21 36.702
22 35.63
23 34.582
24 33.758
25 33.276
26 32.674
27 32.062
28 31.148
29 29.942
30 28.506
31 26.754
32 24.854
33 22.866
34 20.964
35 19.264
36 17.73
37 16.19
38 14.834
39 13.61
40 12.654
41 11.784
42 11.018
43 10.324
44 9.668
45 8.968
46 8.288
47 7.608
48 6.968
49 6.316
50 5.658
};
\addlegendentry{CCA}
\addplot [very thick, color4]
table {%
0 20.39
1 37.564
2 51.124
3 61.39
4 68.644
5 73.288
6 75.706
7 76.022
8 74.904
9 72.664
10 69.478
11 65.798
12 61.982
13 57.906
14 53.73
15 49.644
16 45.866
17 41.95
18 38.402
19 35.054
20 31.982
21 29.054
22 26.322
23 23.902
24 21.548
25 19.464
26 17.618
27 15.988
28 14.446
29 13.052
30 11.77
31 10.628
32 9.576
33 8.66
34 7.854
35 7.016
36 6.3
37 5.632
38 5.056
39 4.61
40 4.194
41 3.766
42 3.372
43 3.054
44 2.792
45 2.518
46 2.282
47 2.072
48 1.864
49 1.672
50 1.512
};
\addlegendentry{Rnd. Grp. max. }

\nextgroupplot[
legend cell align={left},
legend style={fill opacity=0.8, draw opacity=1, text opacity=1, at={(0.91,0.5),}, anchor=east, draw=white!80!black},
tick align=outside,
tick pos=left,
title={Average number of tests used},
x grid style={white!69.0196078431373!black},
xlabel={Day},
xmajorgrids,
xmin=-2.45, xmax=51.45,
xminorgrids,
xtick style={color=black},
y grid style={white!69.0196078431373!black},
ymajorgrids,
ymin=-50, ymax=1050,
yminorgrids,
ytick style={color=black}
]
\addplot [very thick, color0]
table {%
0 0
1 0
2 0
3 0
4 0
5 0
6 0
7 0
8 0
9 0
10 0
11 0
12 0
13 0
14 0
15 0
16 0
17 0
18 0
19 0
20 0
21 0
22 0
23 0
24 0
25 0
26 0
27 0
28 0
29 0
30 0
31 0
32 0
33 0
34 0
35 0
36 0
37 0
38 0
39 0
40 0
41 0
42 0
43 0
44 0
45 0
46 0
47 0
48 0
49 0
};
\addlegendentry{No testing}
\addplot [very thick, color1]
table {%
0 1000
1 980.156
2 961.174
3 943.95
4 928.78
5 915.45
6 903.92
7 894.302
8 886.472
9 879.886
10 874.558
11 870.366
12 867.074
13 864.518
14 862.402
15 860.688
16 859.43
17 858.416
18 857.65
19 857.008
20 856.488
21 856.042
22 855.668
23 855.394
24 855.196
25 855.04
26 854.874
27 854.76
28 854.638
29 854.562
30 854.502
31 854.462
32 854.438
33 854.418
34 854.404
35 854.396
36 854.39
37 854.39
38 854.39
39 854.39
40 854.39
41 854.39
42 854.39
43 854.39
44 854.39
45 854.39
46 854.39
47 854.39
48 854.39
49 854.39
};
\addlegendentry{Complete}
\addplot [very thick, color2]
table {%
0 1000
1 979.862
2 960.276
3 936.022
4 910.116
5 873.238
6 811.372
7 749.926
8 682.614
9 608.17
10 524.382
11 449.484
12 380.534
13 315.018
14 257.228
15 196.854
16 153.008
17 117.518
18 89.074
19 70.472
20 56.578
21 44.798
22 33.838
23 24.106
24 16.998
25 10.686
26 6.448
27 5.102
28 3.378
29 2.402
30 1.808
31 1.82
32 2.39
33 1.278
34 1
35 1
36 1
37 1
38 1
39 1
40 1
41 1
42 1
43 1
44 1
45 1
46 1
47 1
48 1
49 1
};
\addlegendentry{Rnd. Grp. mean}
\addplot [very thick, color3]
table {%
0 1000
1 979.956
2 960.216
3 933.352
4 908.518
5 870.272
6 820.702
7 769.736
8 709.51
9 636.076
10 555.688
11 468.498
12 385.686
13 322.688
14 257.596
15 202.536
16 163.298
17 128.978
18 106.248
19 82.242
20 55.408
21 44.77
22 33.4
23 24.298
24 20.822
25 18.704
26 15.064
27 12.318
28 11.324
29 10.514
30 8.058
31 6.504
32 3.49
33 2.634
34 2.396
35 2.91
36 2.85
37 2.404
38 2.06
39 1.78
40 1.286
41 1.602
42 1
43 1.61
44 1.29
45 1
46 1.294
47 1
48 1
49 1.29
};
\addlegendentry{CCA}
\addplot [very thick, color4]
table {%
0 1000
1 979.61
2 960.078
3 936.702
4 906.024
5 869.646
6 825.096
7 771.162
8 697.742
9 612.33
10 526.574
11 441.91
12 369.378
13 303.774
14 258.896
15 209.882
16 165.85
17 130.788
18 99.514
19 76.654
20 57.298
21 48.694
22 33.734
23 26.076
24 19.88
25 14.222
26 12.144
27 10.722
28 9.824
29 8.238
30 7.306
31 4.662
32 2.858
33 3.594
34 3.35
35 3.268
36 2.036
37 1.96
38 1.992
39 1
40 1
41 1
42 1
43 1
44 1
45 1
46 1
47 1
48 1
49 1
};
\addlegendentry{Rnd. Grp. max.}
\addplot [very thick, color5]
table {%
0 141.440542541821
1 128.245110424809
2 118.772938112971
3 106.531232263447
4 93.6579791075399
5 81.9747940538131
6 70.9775526120461
7 59.7927748918724
8 49.0718824448902
9 41.3843982547259
10 33.9219061968275
11 27.038252975265
12 21.4440109074949
13 16.8037390768501
14 13.8514364092336
15 11.2589580772407
16 8.31101223023899
17 6.75036614190752
18 5.15108318185189
19 4.27190322770539
20 3.49148857924924
21 2.96154072034629
22 2.49914210571421
23 1.81683848628215
24 1.35464550730173
25 1.07092519324101
26 1.09650201133368
27 0.75986193267728
28 0.807125635884263
29 0.514097167699905
30 0.41043416369512
31 0.270451384321691
32 0.164049538307259
33 0.130889904056867
34 0.0898658476212797
35 0.0532526576864038
36 0.0413125365794689
37 0
38 0
39 0
40 0
41 0
42 0
43 0
44 0
45 0
46 0
47 0
48 0
49 0
};
\addlegendentry{Lower bound}
\end{groupplot}

\end{tikzpicture}}}
\subcaptionbox{\footnotesize $(N,C,p_{\init},q_1,q_2) = (5000,50,0.02,0.012,8\times 10^{-5})$.\label{fig:heuristic2}}
{\scalebox{0.55}{
\begin{tikzpicture}

\definecolor{color0}{rgb}{0.12156862745098,0.466666666666667,0.705882352941177}
\definecolor{color1}{rgb}{1,0.498039215686275,0.0549019607843137}
\definecolor{color2}{rgb}{0.172549019607843,0.627450980392157,0.172549019607843}
\definecolor{color3}{rgb}{0.83921568627451,0.152941176470588,0.156862745098039}
\definecolor{color4}{rgb}{0.580392156862745,0.403921568627451,0.741176470588235}
\definecolor{color5}{rgb}{0.549019607843137,0.337254901960784,0.294117647058824}

\begin{groupplot}[group style={group size=2 by 1, horizontal sep=1.3cm}]
\nextgroupplot[
legend cell align={left},
legend style={fill opacity=0.8, draw opacity=1, text opacity=1, draw=white!80!black},
tick align=outside,
tick pos=left,
title={Average number of infected individuals},
x grid style={white!69.0196078431373!black},
xlabel={Day},
xmajorgrids,
xmin=-2.5, xmax=52.5,
xminorgrids,
xtick style={color=black},
y grid style={white!69.0196078431373!black},
ymajorgrids,
ymin=-31.0767, ymax=3500,
yminorgrids,
ytick style={color=black}
]
\addplot [very thick, color0]
table {%
0 99.936
1 185.874
2 338.446
3 596.43
4 1001.412
5 1566.986
6 2229.278
7 2827.064
8 3198.75
9 3302.87
10 3207.55
11 3008.992
12 2768.57
13 2523.018
14 2287.84
15 2068.59
16 1867.182
17 1683.94
18 1518.514
19 1367.844
20 1231.738
21 1110.522
22 999.766
23 901.166
24 811.192
25 730.406
26 656.778
27 591.476
28 532.142
29 478.76
30 431.4
31 387.566
32 348.808
33 314.04
34 282.996
35 254.68
36 228.966
37 206.098
38 185.6
39 166.98
40 150.158
41 134.954
42 121.566
43 109.384
44 98.466
45 88.516
46 79.694
47 71.8
48 64.45
49 57.978
50 52.08
};
\addlegendentry{No testing}
\addplot [very thick, color1]
table {%
0 99.758
1 185.862
2 256.074
3 310.836
4 350.648
5 377.262
6 392.304
7 398.242
8 396.208
9 388.062
10 374.746
11 358.768
12 339.834
13 320.008
14 299.822
15 279.58
16 259.47
17 239.762
18 221.018
19 203.182
20 186.366
21 170.708
22 155.998
23 142.362
24 129.738
25 117.938
26 107.234
27 97.09
28 88.096
29 79.71
30 72.138
31 65.108
32 58.96
33 53.328
34 48.186
35 43.444
36 39.04
37 35.238
38 31.716
39 28.554
40 25.81
41 23.276
42 20.936
43 18.944
44 16.934
45 15.312
46 13.842
47 12.452
48 11.176
49 10.1
50 9.092
};
\addlegendentry{Complete}
\addplot [very thick, color2]
table {%
0 100.644
1 187.142
2 257.7
3 312.422
4 352.832
5 378.782
6 393.238
7 399.016
8 397.228
9 389.01
10 376.624
11 360.79
12 342.634
13 323.02
14 302.188
15 281.964
16 261.75
17 241.996
18 223.086
19 205.398
20 188.4
21 172.654
22 157.904
23 144.822
24 133.778
25 124.528
26 117.93
27 114.404
28 113.608
29 115.754
30 120.564
31 127.25
32 135.936
33 146.194
34 158.624
35 172.43
36 187.194
37 202.662
38 217.16
39 229.906
40 238.896
41 244.544
42 245.52
43 242.568
44 236.486
45 229.55
46 221.164
47 212.022
48 202.076
49 191.068
50 179.62
};
\addlegendentry{Rnd. Grp. mean}
\addplot [very thick, color3]
table {%
0 99.974
1 185.386
2 254.832
3 309.326
4 348.7
5 375.294
6 390.694
7 396.302
8 393.774
9 386.012
10 373.568
11 357.918
12 339.906
13 320.022
14 299.522
15 279.508
16 259.124
17 239.458
18 220.312
19 202.346
20 185.662
21 169.874
22 155.274
23 142.062
24 129.638
25 118.766
26 109.072
27 101.068
28 94.454
29 89.432
30 86.12
31 83.9
32 82.702
33 82.054
34 81.692
35 81.066
36 80.11
37 78.35
38 76.106
39 73.308
40 70.686
41 68.24
42 66.098
43 64.34
44 62.748
45 60.938
46 58.276
47 55.336
48 51.768
49 47.956
50 43.958
};
\addlegendentry{CCA}
\addplot [very thick, color4]
table {%
0 100.148
1 185.3
2 254.496
3 308.276
4 347.824
5 373.722
6 388.864
7 394.484
8 391.602
9 382.916
10 370.388
11 354.91
12 336.36
13 316.342
14 296.312
15 276.13
16 256.042
17 236.794
18 218.426
19 200.724
20 184.12
21 168.808
22 154.398
23 141.396
24 130.008
25 119.92
26 112.264
27 106.716
28 104.764
29 106.098
30 111.602
31 122.192
32 136.044
33 153.352
34 173.71
35 196.504
36 221.992
37 248.23
38 274.162
39 297.544
40 315.79
41 327.936
42 333.7
43 332.946
44 325.092
45 311.794
46 294.554
47 274.762
48 254.21
49 233.812
50 215.048
};
\addlegendentry{Rnd. Grp. max. }

\nextgroupplot[
legend cell align={left},
legend style={fill opacity=0.8, draw opacity=1, text opacity=1, at={(0.91,0.5)}, anchor=east, draw=white!80!black},
tick align=outside,
tick pos=left,
title={Average number of tests used},
x grid style={white!69.0196078431373!black},
xlabel={Day},
xmajorgrids,
xmin=-2.45, xmax=51.45,
xminorgrids,
xtick style={color=black},
y grid style={white!69.0196078431373!black},
ymajorgrids,
ymin=-50, ymax=5200,
yminorgrids,
ytick style={color=black}
]
\addplot [very thick, color0]
table {%
0 0
1 0
2 0
3 0
4 0
5 0
6 0
7 0
8 0
9 0
10 0
11 0
12 0
13 0
14 0
15 0
16 0
17 0
18 0
19 0
20 0
21 0
22 0
23 0
24 0
25 0
26 0
27 0
28 0
29 0
30 0
31 0
32 0
33 0
34 0
35 0
36 0
37 0
38 0
39 0
40 0
41 0
42 0
43 0
44 0
45 0
46 0
47 0
48 0
49 0
};
\addlegendentry{No testing}
\addplot [very thick, color1]
table {%
0 5000
1 4900.242
2 4804.178
3 4715.234
4 4634.84
5 4564.156
6 4502.334
7 4449.342
8 4404.258
9 4366.448
10 4335.424
11 4309.474
12 4288.324
13 4270.896
14 4256.686
15 4244.826
16 4235.284
17 4227.504
18 4221.108
19 4215.968
20 4211.632
21 4208.226
22 4205.488
23 4203.298
24 4201.43
25 4199.968
26 4198.782
27 4197.862
28 4197.058
29 4196.404
30 4195.924
31 4195.508
32 4195.158
33 4194.89
34 4194.698
35 4194.534
36 4194.418
37 4194.318
38 4194.222
39 4194.152
40 4194.118
41 4194.092
42 4194.08
43 4194.066
44 4194.056
45 4194.052
46 4194.048
47 4194.046
48 4194.044
49 4194.044
};
\addlegendentry{Complete}
\addplot [very thick, color2]
table {%
0 5000
1 4899.356
2 4802.728
3 4713.804
4 4633.462
5 4561.89
6 4498.982
7 4436.912
8 4375.34
9 4251.28
10 4032.568
11 3722.176
12 3368.284
13 2956.194
14 2449.074
15 2043.174
16 1673.202
17 1348.874
18 1116.036
19 918.608
20 724.03
21 555.258
22 423.958
23 337.91
24 262.944
25 206.502
26 165.684
27 132.43
28 103.164
29 91.232
30 68.668
31 54.138
32 35.666
33 26.06
34 15.624
35 11.78
36 7.214
37 7.204
38 6.56
39 4.018
40 3.52
41 2.05
42 1.374
43 1.368
44 1
45 1
46 1
47 1
48 1
49 1
};
\addlegendentry{Rnd. Grp. mean}
\addplot [very thick, color3]
table {%
0 5000
1 4900.026
2 4804.656
3 4716.652
4 4636.936
5 4566.376
6 4504.574
7 4445.93
8 4375.78
9 4270.448
10 4101.69
11 3763.208
12 3391.976
13 2926.742
14 2431.536
15 2033.888
16 1648.688
17 1336.718
18 1084.018
19 851.204
20 702.77
21 567.26
22 475.464
23 387.354
24 311.506
25 231.12
26 170.514
27 135.968
28 90.936
29 77.124
30 61.89
31 47.714
32 38.046
33 27.386
34 28.154
35 20.844
36 15.53
37 12.12
38 10.484
39 9.762
40 8.534
41 7.552
42 4.002
43 2.264
44 3.056
45 1.67
46 1.352
47 1.366
48 1.352
49 1
};
\addlegendentry{CCA}
\addplot [very thick, color4]
table {%
0 5000
1 4899.852
2 4804.578
3 4716.962
4 4637.62
5 4567.674
6 4506.758
7 4443.648
8 4370.408
9 4245.704
10 4052.044
11 3756.926
12 3307.594
13 2852.178
14 2397.238
15 2054.836
16 1676.798
17 1380.916
18 1127.16
19 906.048
20 714.95
21 577.466
22 477.706
23 368.672
24 279.59
25 228.602
26 184.53
27 145.266
28 107.72
29 89.708
30 72.632
31 61.664
32 43.56
33 36.296
34 32.852
35 25.538
36 19.45
37 16.154
38 10.562
39 10.09
40 10.434
41 7.644
42 5.562
43 5.812
44 6.69
45 6.022
46 5.676
47 5.544
48 3.754
49 2.684
};
\addlegendentry{Rnd. Grp. max.}
\addplot [very thick, color5]
table {%
0 707.202712709101
1 653.973382101686
2 610.360154018117
3 557.717691327573
4 502.578628744084
5 443.600981071981
6 390.391425248199
7 337.922431191269
8 290.943097518855
9 247.29315251692
10 206.263348499004
11 175.194398463947
12 145.242034809726
13 121.424726441868
14 100.42330560151
15 84.6785336141085
16 68.9405662591154
17 56.6677420632803
18 46.9194843429803
19 38.2324020598705
20 32.2686666574364
21 25.5983334652033
22 20.7633885616763
23 16.822112159269
24 14.2316581103508
25 11.2279084869799
26 9.12234963717444
27 7.11630729165668
28 6.21289922233003
29 5.05947885012259
30 3.7447167028042
31 3.22523072767378
32 2.71528969019449
33 2.0385903179288
34 1.52896511455615
35 1.28470723546018
36 0.907841802486064
37 0.797173589978574
38 0.752777961812139
39 0.540455288102603
40 0.277569220796371
41 0.205550785126329
42 0.0986672931647189
43 0.112201052971036
44 0.0789859162676169
45 0.0335839758292738
46 0.0343112404877179
47 0.0166723592816167
48 0.0164848034619211
49 0
};
\addlegendentry{Lower bound}
\end{groupplot}

\end{tikzpicture}}}
\caption{\small  Experimental results for the heuristic procedure described in Appendix~\ref{app:heuristics}.  We plot the average number of infected individuals and the number of tests used as a function of time (in days). For comparison, we also plot the performance when no one is tested (no testing) and when everyone is tested (Complete).}
\label{fig:heuristics}
\end{figure*}
	
Now, we consider two different cases for individual $j$ that is added to $\defectives \cup \{j\}$:\\
(i) If $j \notin \defectives'$, then from our assumption in (1), notice that the defective set $\defectives' \cup \{j\}$ explains the test results of $\defectives \cup \{j\}$ -- $\defectives'$ gives the same test results as $\defectives$ and the extra individual $j$ added to both the sets will still give the same results. We next claim that $\Pr(\infectionstatuses (\defectives'\cup \{j\});\priors)> \Pr(\infectionstatuses (\defectives \cup \{j\});\priors),$ and consequently the MAP decoder will fail to correctly identify the defective set $\defectives' \cup \{j\}$. Now to prove our claim, we start with our assumption (1), i.e., 
\begin{align*}
    &\Pr(\infectionstatuses (\defectives');\priors) > \Pr(\infectionstatuses (\defectives);\priors)\\
    &\implies \prod_{i\in \defectives'}p_i \prod_{l\in [N]\setminus \defectives'} (1-p_l) > \prod_{i\in \defectives}p_i \prod_{l\in [N]\setminus \defectives} (1-p_l)\\
    &\overset{(a)}{\implies} (1-p_j)\prod_{i\in \defectives'}p_i \prod_{l\in [N]\setminus \defectives'\cup \{j\} } (1-p_l) \\ 
    &\hspace{2cm}> (1-p_j) \prod_{i\in \defectives}p_i \prod_{l\in [N]\setminus \defectives\cup \{j\} } (1-p_l)\\
    &\overset{(b)}{\implies} p_j\prod_{i\in \defectives'}p_i \prod_{l\in [N]\setminus \defectives'\cup \{j\} } (1-p_l) \\ 
    &\hspace{2cm}> p_j \prod_{i\in \defectives}p_i \prod_{l\in [N]\setminus \defectives\cup \{j\} } (1-p_l) \\
    &\overset{(c)}{\implies} \prod_{i\in \defectives'\cup \{j\}}p_i \prod_{l\in [N]\setminus \defectives'\cup \{j\} } (1-p_l) \\ 
    &\hspace{2cm}>  \prod_{i\in \defectives\cup \{j\}}p_i \prod_{l\in [N]\setminus \defectives\cup \{j\} } (1-p_l) \\
    &\implies \Pr(\infectionstatuses (\defectives'\cup \{j\});\priors) > \Pr(\infectionstatuses (\defectives \cup \{j\});\priors),
\end{align*}
where in $(a)$ we take out the term corresponding to $j$, also we use the fact that $j\notin \defectives$ and $j\notin \defectives'$; $(b)$ follows from multiplying both sides with $\sfrac{p_j}{1-p_j}$; in $(c)$ we push the $p_j$ term into the first product term.

(ii) If $j \in \defectives'$, we again first note that the defective set $\defectives'\cup \{j\}=\defectives'$ explains the test results of $\defectives \cup \{j\}$.  We next claim that $\Pr(\infectionstatuses (\defectives');\priors)> \Pr(\infectionstatuses (\defectives \cup \{j\});\priors),$ and consequently the MAP decoder will fail to correctly identify the defective set $\defectives' \cup \{j\}$. Now to prove our claim, we start with our assumption (1), i.e., 
\begin{align*}
    &\Pr(\infectionstatuses (\defectives');\priors) > \Pr(\infectionstatuses (\defectives);\priors)\\
    &\implies \prod_{i\in \defectives'}p_i \prod_{l\in [N]\setminus \defectives'} (1-p_l) > \prod_{i\in \defectives}p_i \prod_{l\in [N]\setminus \defectives} (1-p_l)\\
    &\overset{(a)}{\implies} p_j \prod_{i\in \defectives'\setminus\{j\}}p_i \prod_{l\in [N]\setminus \defectives' } (1-p_l) \\ 
    &\hspace{2cm}> (1-p_j) \prod_{i\in \defectives}p_i \prod_{l\in [N]\setminus \defectives\cup \{j\} } (1-p_l)\\
    &\overset{(b)}{\implies} p_j\prod_{i\in \defectives'\setminus\{j\}}p_i \prod_{l\in [N]\setminus \defectives' } (1-p_l) \\ 
    &\hspace{2cm}> p_j \prod_{i\in \defectives}p_i \prod_{l\in [N]\setminus \defectives\cup \{j\} } (1-p_l) \\
    &\overset{(c)}{\implies} \prod_{i\in \defectives'}p_i \prod_{l\in [N]\setminus \defectives' } (1-p_l) \\ 
    &\hspace{2cm}>  \prod_{i\in \defectives\cup \{j\}}p_i \prod_{l\in [N]\setminus \defectives\cup \{j\} } (1-p_l) \\
    &\implies \Pr(\infectionstatuses (\defectives');\priors) > \Pr(\infectionstatuses (\defectives \cup \{j\});\priors),
\end{align*}
where in $(a)$ we take out the term corresponding to $j$, also note that $j\notin \defectives$ but $j\in \defectives'$; $(b)$ follows from the fact that $1-p_j\geq p_j$ when $p_j\leq 0.5$, so we can replace the $(1-p_j)$ term on the right-hand side by $p_j$ without affecting the inequality; in $(c)$ we push the $p_j$ term into the first product term.
\end{proof}

\section{Auxiliary results for~\Cref{thm:bounded_proof}}
\label{app:aux}
In this section we prove some auxiliary statements about functions $f_1(x)$, $f_2(x)$ and $f_3(x)$ that are used at the end of the proof of \Cref{thm:bounded_proof}:\\
$\bullet$ $f_1(x) = \frac{1-\kappa x}{1-x}$ is increasing for $\kappa \in (0,1)$, because $f'_1(x) =  -\frac{\kappa-1}{\left(x-1\right)^2}>0$.\\
$\bullet$ $f_2(x) = (1 - q_2)^x$ is decreasing for $q_2 \in (0,1)$, because $f'_2(x)  = \ln\left(1-q_2\right)\left(1-q_2\right)^x<0$.\\
$\bullet$ $f_3(x) =  \frac{1-c_1^{x}}{1-c_2^{x}}$ is decreasing for $q_1 \geq q_2$, because of the following:
Let $c_1 = 1-q_1$ and $c_2 = 1-q_2$, so that $c_2 \geq c_1$. Then,
\begin{align*}
	f'_3(x)  
	&= \frac{1}{\left(1-c_2^x\right)^2}\left( (1-c_1)^x c_2^x \ln{c_2} - \left(1-c_2^x\right) c_1^x \ln{c_1} \right) \\
	&= \frac{1}{x \left(1-c_2^x\right)^2} \left( (1-c_1)^x c_2^x \ln{c_2^x} - \left(1-c_2^x\right) c_1^x \ln{c_1^x}\right) \\
	&= \frac{(1-c_1)^x (1-c_2)^x}{x \left(1-c_2^x\right)^2} \left( \frac{ c_2^x \ln{c_2^x}}{(1-c_2)^x} - \frac{ c_1^x \ln{c_1^x}}{(1-c_1)^x}\right) \overset{(a)}{\leq} 0,
\end{align*} 
where $(a)$ follows from the fact that $c_2 \leq c_1$ and the function $g(c) = \frac{c \ln{c}}{1-c}$ is non-increasing for $c \in (0,1)$.
The latter can be seen by taking the derivative $g'(c) = \frac{\ln{c}-c+1}{\left(1-c\right)^2}$, which is always non-positive for $c \in (0,1)$, as $\ln{c} \leq c - 1$.

\section{A heuristics for dynamic group testing}
\label{app:heuristics}
Given the results and discussion in \Cref{sec:numerical}, a natural question to ask is if one could use a number of tests based on the upper bounds discussed in \Cref{subsec:upper_bounds}. In particular, we focus on the upper bound for CCA which implies that CCA achieves a probability of error less than $2\numitems^{-\delta}$ with a number of tests at most $4e(1+\delta)\numitems \prior_{\mean} \log \numitems$ (see Theorem 3 in \cite{prior}). Note that the probability of error is small, but not zero, for finite values of $N$. Here, we use a number of tests equal to $12e \numitems \prior_{\mean} \log \numitems$ each day (corresponding to an error probability less than $2\numitems^{-2}$) and plot the number of errors made by each of the three test designs considered in \Cref{sec:numerical}. The experimental set-up is as follows: \\
$(i)$ We maintain an estimate of the probability $p_j^{(t)}$ that a susceptible individual belonging to community $j$ becomes infected on day $t$ (see \Cref{subsec:model} for the precise definition), for each community $j$, and for each day $t$.\\
$(ii)$ At the beginning of day $t$, we obtain the results of the tests administered on day $t-1$. From these results, we form an estimate $\widehat U_i^{(t-1)}$ of the infection statuses $U_i^{(t-1)}$  of individual $i$ at the beginning of day $t-1$, for each $i$. In order to learn the statuses, we use the \textit{ Definite Defective} (DD) decoder (see Section 2.4 in \cite{GroupTestingMonograph}) which is guaranteed to have no false positives. Indeed, one could use more sophisticated decoders, such as ones based on loopy belief propagation. However, these decoders potentially give rise to both false positives and false negatives, resulting in an unfair comparison across different algorithms\footnote{Indeed, this begs the very complicated comparison between the impact of false positives and false negatives, which we avoid for the sake of simplicity.}.\\
$(iii)$ We isolate all individuals $i$ where $\widehat U_i^{(t-1)}=1$.\\
$(iv)$ We update $p_j^{(t-1)}$ using our estimates $\widehat U_i^{(t-1)}$.\\
$(v)$ Using our estimates of $p_j^{(t-1)}$, we estimate the value of $p_{\mean}^{(t)}$ and choose a number of tests $$T=\min\{12e \numitems^{(t)} \prior_{\mean}^{(t)} \log \numitems^{(t)},\numitems^{(t)}\},$$ where $\numitems^{(t)}$ is the current number of non-isolated individuals in the community. We next construct a testing matrix with $T$ tests and administer these tests. For complete testing, we use $T=\numitems^{(t)}$.\\
$(vi)$ Steps $(ii)-(v)$ repeat each day.

Given the above set-up, \Cref{fig:heuristics} compares the performance of the test designs described in \Cref{sec:numerical}. We make a few observations:\\
$\bullet$ We see that the algorithms do not always attain the performance of complete testing. This is due to the fact that for finite $N$, the probability of error is non-zero. However, as seen from  \Cref{fig:heuristics}, the performance of CCA improves as $N$ increases. On the other hand, Rnd. Grp. max. has the opposite trend; this is not surprising since the number of tests was chosen based on the upper bound for CCA and as a result there is no guarantee that the same number of tests is sufficient for Rnd. Grp. max.\\
$\bullet$ In comparison to the plots in \cref{fig:numerics}, the number of tests used here is much higher during the initial few days, which indicates the looseness of the upper bound; it remains open to show tighter upper bounds for these algorithms.\\
$\bullet$ Suppose we make an error when identifying the infection status on a particular day $t$, the estimates of $p_j^{(t-1)}$ are not exact, which in turn leads to potentially insufficient choices for the number of tests needed for subsequent days and inaccurate test designs. This drives an error accumulation and as a result the later days are more prone to error, as also seen in \Cref{fig:heuristics}.



\end{document}